\newbox\ProofSym
\newtheorem{definition}{Definition}[section]
\newtheorem{theorem}{Theorem}[section]
\newtheorem{lemma}{Lemma}[section]
\newtheorem{claim}{Claim}[section]
\newcommand{\eps}{\varepsilon}
\newcommand{\polylog}{\mathrm{polylog}}
\newcommand{\E}{\mathsf{E}}
\newcommand{\A}{\mathcal{A}}
\newcommand{\C}{\mathcal{C}}
\newcommand{\var}{\mathsf{Var}}
\renewcommand{\Pr}{\mathsf{Pr}}
\newcommand{\abs}[1]{\left| #1 \right|}
\newcommand{\qinomit}[1]{}
\begin{document}
\title{Randomized Algorithms for Tracking Distributed Count, Frequencies, and
Ranks}

\author{Zengfeng Huang \hspace{1cm} Ke Yi \\
\\
Hong Kong University of Science and Technology \\
$\{$huangzf, yike$\}$@cse.ust.hk
\and Qin Zhang \\
\\
MADALGO, University of Aarhus \\
qinzhang@cs.au.dk
}

\maketitle

\begin{abstract}
  We show that randomization can lead to significant improvements for a few
  fundamental problems in distributed tracking.  Our basis is the {\em
    count-tracking} problem, where there are $k$ players, each holding a
  counter $n_i$ that gets incremented over time, and the goal is to track
  an $\eps$-approximation of their sum $n=\sum_i n_i$ continuously at all
  times, using minimum communication.  While the deterministic
  communication complexity of the problem is $\Theta(k/\eps \cdot \log N)$,
  where $N$ is the final value of $n$ when the tracking finishes, we show
  that with randomization, the communication cost can be reduced to
  $\Theta(\sqrt{k}/\eps \cdot \log N)$.  Our algorithm is simple and uses
  only $O(1)$ space at each player, while the lower bound holds even
  assuming each player has infinite computing power.  Then, we extend our
  techniques to two related distributed tracking problems: {\em
  frequency-tracking} and {\em rank-tracking}, and obtain similar
  improvements over previous deterministic algorithms.  Both problems are
  of central importance in large data monitoring and analysis, and have
  been extensively studied in the literature.
\end{abstract}

%% \setcounter{page}{0}
%% \newpage

\section{Introduction}

We start with a very basic problem in distributed tracking, what we call
{\em count-tracking}.  There are $k$ players each holding a counter $n_i$
that is initially $0$.  Over time, the counters get incremented and we denote
by $n_i(t)$ the value of the counter $n_i$ at time $t$.  The goal is to
track an $\eps$-approximation of the total count $n(t)=\sum_i n_i(t)$,
i.e., an $\hat{n}(t)$ such that $(1-\eps)n(t) \le \hat{n}(t) \le
(1+\eps)n(t)$,\footnote{We sometimes omit ``$(t)$'' when the context is
  clear.} continuously at all times.  There is a coordinator whose job is
to maintain such an $\hat{n}(t)$, and will try to do so using minimum
communication with the $k$ players (the formal model of computation will be
defined shortly).

There is a trivial solution to the count-tracking problem: Every time a
counter $n_i$ has increased by a $1+\eps$ factor, the player informs the
coordinator of the change.  Thus, the coordinator always has an
$\eps$-approximation of every $n_i$, hence an $\eps$-approximation of their
sum $n$.  Letting $N$ denote the final value of $n$, simple analysis shows
that the communication cost of this algorithm is $O(k/\eps \cdot \log
N)$\footnote{A more careful analysis leads to a slightly better bound of
  $O(k/\eps \cdot \log(\eps N/k))$, but we will assume that $N$ is
  sufficiently large, compared to $k$ and $1/\eps$, to simplify the
  bounds.}.  This algorithm was actually used in \cite{keralapura06} for
solving essentially the same problem, which also provided many practical
motivations for studying this problem.  Note that this algorithm is
deterministic and only uses one-way communication (from the players to the
coordinator), and yet it turns out this simple algorithm is already optimal
for deterministic algorithms, even if two-way communication is allowed
\cite{yi09:_optim}.  Thus the immediate questions are: What about
randomized algorithms that are allowed to fail with a small probability?
Is two-way communication not useful at all?  In this paper, we set out to address these
questions, and then move on to consider other related distributed tracking
problems.

\subsection{The distributed tracking model}
We first give a more formal definition of the computation model that we
will work with, which is essentially the same as those used in prior work
on distributed tracking
\cite{Babcock:Olston:03,cormode08:algorithms,cormode10:_optim,Cormode:Garofalakis:Muthukrishnan:Rastogi:05,yi09:_optim,sharfman08:_shape,chan10:_contin,arackaparambil09,keralapura06}.
There are $k$ distributed {\em sites} $S_1,\dots,S_k$, each receiving a
stream of elements over time, possibly at varying rates.  Let $N$ be the
total number of elements in all $k$ streams.  We denote by $A_i(t)$ the
multiset (bag) of elements received by $S_i$ up until time $t$, and let
$A(t) = \biguplus_{i=1}^k A_i(t)$ be the combined data set, where $\uplus$
denotes multiset addition.  There is a coordinator whose job is to maintain
(an approximation of) $f(A(t))$ continuously at all times, for a given
function $f$ (e.g., $f(A(t)) = |A(t)|$ for the count-tracking problem
above).  The coordinator has a direct two-way communication channel with
each of the sites; note that broadcasting a message costs $k$ times the
communication for a single message.  The sites do not communicate with each
other directly, but this is not a limitation since they can always pass
messages via the coordinator.  We assume that communication is instant,
i.e., no element will arrive until all parties have decided not to send
more messages.  As in prior work, our measures of complexity will be the
communication cost and the space used to process each stream.  Unless
otherwise specified, the unit of both measures is a {\em word}, and we
assume that any integer less than $N$, as well as an element from the
stream, can fit in one word.

\begin{table*}[t]
  \centering
  \begin{tabular}{c|c|c|c}
\hline
& & space (per site) & communication \\
\hline
\hline
count-tracking & trivial & $O(1)$ &
$\Theta(k/\eps\cdot\log N)$ \\
& new & $O(1)$ & $O(\sqrt{k}/\eps\cdot \log N)$ \\
& & & $\Omega(\sqrt{k}/\eps\cdot \log N)$ messages \\
\hline
frequency-tracking & \cite{yi09:_optim} &  $O(1/\eps)$&
$\Theta(k/\eps \cdot \log N)$ \\
& new & $O(1/(\eps\sqrt{k}))$ & $O(\sqrt{k}/\eps \cdot \log N)$ \\
& & $\Omega(1/(\eps\sqrt{k}))$ bits$^\star$ & $\Omega(\sqrt{k}/\eps\cdot
\log N)$ messages \\ 
\hline
rank-tracking & \cite{yi09:_optim} &  $O(1/\eps \cdot \log n)$&
$O(k/\eps\cdot \log N\log^2(1/\eps))$ \\ 
& new &  $O\left(1/(\eps\sqrt{k})\cdot\log^{1.5}\frac{1}{\eps}\log^{0.5}\frac{1}{\eps
  \sqrt{k}}\right)$& $O\left(\sqrt{k}/\eps \cdot \log N
\log^{1.5}\frac{1}{\eps\sqrt{k}}\right)$ \\
&& $\Omega(1/(\eps\sqrt{k}))$ bits$^\star$ & $\Omega(\sqrt{k}/\eps\cdot \log N)$ messages 
\\
\hline
sampling & \cite{cormode10:_optim} & $O(1)$ & $O(1/\eps^2 \cdot \log N)$ \\
\hline
  \end{tabular}
  \caption{Space and communication costs of previous and new algorithms.
    We assume $k\le 1/\eps^2$.  All upper bounds are in terms of
    words. $^\star$This is conditioned upon the communication cost being
    $O(\sqrt{k}/\eps \cdot\log N)$ bits.}
  \label{tab:results}
\end{table*}

This model was initially abstracted from many applied settings, ranging
from distributed data monitoring, wireless sensor networks, to network
traffic analysis, and has been extensively studied in the database
community.  From 2008 \cite{cormode08:algorithms}, the model has started to
attract interests from the theory community as well, as it naturally
combines two well-studied models: the data stream model and multi-party
communication complexity.  When there is only $k=1$ site who also plays the
role of the coordinator, the model degenerates to the standard streaming
model; when $k\ge 2$ and our goal is to do a one-shot computation of
$f(A(\infty))$, then the model degenerates to the (number-in-hand)
$k$-party communication model.  Thus, distributed tracking is more general
than both models.  Meanwhile, it also appears to be significantly different
from either, with the above count-tracking problem being the best example.
This problem is trivial in both the streaming and the communication model
(even computing the exact count is trivial), whereas it becomes nontrivial in
the distributed tracking model and requires new techniques, especially when
randomization is allowed, as illustrated by our results in this paper.

Note that there is some work on {\em distributed streaming} (see e.g.\ 
\cite{lall10:_global,manjhi05:_findin,gibbons01,gibbons02:_distr}) that
adopts a model very similar to ours, but with a fundamental difference.  In
their model there are $k$ streams, each of which runs a streaming algorithm
on its local data.  But the function $f$ on the combined streams is
computed only at the end or upon requests by the user.  As one can see that
the count-tracking problem is also trivial in this model.  The crucial
difference is that, in this model, the sites wait passively to get polled.
If we want to track $f$ continuously, we have to poll the sites all the
time.  Whereas in our model, the sites actively participate in the tracking
protocol to make sure that $f$ is always up-to-date.

\subsection{Problem statements, previous and new results}
In this paper, we first study the count-tracking problem.  Then we extend
our approach to two related, more general problems: {\em
  frequency-tracking} and {\em rank-tracking}.  Both problems are of
central importance in large data monitoring and analysis, and have been
extensively studied in the literature.  In all the communication upper
bounds, we will assume $k\le 1/\eps^2$; otherwise all of them will carry an
extra additive $O(k \log N)$ term.  There are other good reasons to justify
this assumption, which we will explain later.  All our results are
summarized in Table~\ref{tab:results}; below we discuss each of them
respectively.

As mentioned earlier, the deterministic communication complexity for the
count-tracking problem has been settled at $\Theta(k/\eps\cdot \log N)$
\cite{yi09:_optim}\footnote{The lower bound in \cite{yi09:_optim} was
  stated for the heavy hitters tracking problem, but essentially the same
  proof works for count-tracking.}, with or without two-way communication.
In this paper, we show that with randomization {\em and} two-way
communication, this is reduced to $\Theta(\sqrt{k}/\eps \cdot \log N)$.  We
first in Section~\ref{sec:track-distr-sum} present a randomized algorithm
with this communication cost that, at {\em any one} given time instance,
maintains an $\eps$-approximation of the current $n$ with a constant
probability.  The algorithm is very simple and uses $O(1)$ space at each
site.  It is easy to make the algorithm correct for {\em all} time
instances and boost the probability to $1-\delta$: Since we can use the
same approximate value $\hat{n}$ of $n$ until $n$ grows by a $1+\eps$
factor, it suffices to make the algorithm correct for $O(\log_{1+\eps} N) =
O(1/\eps\cdot \log N)$ time instances.  Then running $O(\log(\frac{\log
  N}{\delta\eps}))$ independent copies of the algorithm and taking the
median will achieve the goal of tracking $n$ continuously at all times,
with probability at least $1-\delta$.  The $\Omega(\sqrt{k}/\eps \cdot \log
N)$ lower bound (Section~\ref{sec:lower-bound}) actually holds on the
number of messages that have to be exchanged, regardless of the message
size, and holds even assuming the sites have unlimited space and computing
power.  That randomization is necessary to achieve this $\sqrt{k}$-factor
improvement follows from the previous deterministic lower bound
\cite{yi09:_optim}; here in Section~\ref{sec:lower-bound} we give an proof
that two-way communication is also required.  More precisely, we show that
any randomized algorithm with one-way communication has to use
$\Omega(k/\eps \cdot \log N)$ communication, i.e., the same as that for
deterministic algorithms.

In the {\em frequency-tracking} (a.k.a.\ {\em heavy hitters tracking})
problem, $A(t)$ is a multiset of cardinality $n(t)$ at time $t$. Let
$f_j(t)$ be the frequency of element $j$ in $A(t)$.  The goal is to
maintain a data structure from which $f_j(t)$, for any given $j$, can be
estimated with absolute error at most $\eps n(t)$, with probability at
least $0.9$ (say).  Note that this problem degenerates to count-tracking
when there is only one element.  It is reasonable to ask for an error in
terms of $n(t)$: if the error were $\eps f_j(t)$, then every element would
have to be reported if they were all distinct.  In fact, this error
requirement is the widely accepted definition for the heavy hitters
problem, which has been extensively studied in the streaming literature
\cite{cormode08:_findin1}.  Several algorithms with the optimal $O(1/\eps)$
space exist \cite{misra:finding,manku02:_approx,metwally06}.  In the
distributed tracking model, we previously \cite{yi09:_optim} gave a
deterministic algorithm with $O(k/\eps\cdot \log N)$ communication, which
is the best possible for deterministic algorithms.  In this paper, by
generalizing our count-tracking algorithm, we reduce the cost to
$O(\sqrt{k}/\eps\cdot \log N)$, with randomization
(Section~\ref{sec:track-distr-freq}).  Since this problem is more general
than count-tracking, by the count-tracking lower bound, this is also
optimal.  Our algorithm uses $O(1/(\eps\sqrt{k}))$ space to process the
stream at each site, which is actually smaller than the $\Omega(1/\eps)$
space lower bound for this problem in the streaming model.  This should not
come at a surprise: Due to the fact that the site is allowed to communicate
to the coordinator {\em during} the streaming process, the streaming lower
bounds do not apply in our model.  To this end, we prove a new space lower
bound of $\Omega(1/(\eps \sqrt{k}))$ bits for our model, showing that our
algorithm also uses near-optimal space.  This space lower bound is
conditioned upon the requirement that the communication cost should be
$O(\sqrt{k}/\eps \cdot\log N)$ bits.  Note that it is not possible to
prove a space lower bound unconditional of communication: A site can send
every element to the coordinator and thus only needs $O(1)$ space.  In
fact, what we prove is a space-communication trade-off; please see
Section~\ref{sec:lower-bound-1} for the precise statement.

For the {\em rank-tracking} problem, it will be convenient to assume that
the elements are drawn from a totally ordered universe and $A(t)$ contains
no duplicates.  The {\em rank} of an element $x$ in $A(t)$ ($x$ may not be
in $A(t)$) is the number of elements in $A(t)$ smaller than $x$, and our
goal is to compute a data structure from which the rank of any given $x$
can be estimated with error at most $\eps n(t)$, with constant probability.
Note that a rank-tracking algorithm also solves the frequency-tracking
problem (but not vice versa), by turning each element $x$ into a pair
$(x,y)$ to break all ties and maintaining such a rank-tracking data
structure.  When the frequency of $x$ is desired, we ask for the ranks of
$(x,0)$ and $(x,\infty)$ and take the difference.  We previously
\cite{yi09:_optim} gave a deterministic algorithm for the rank-tracking
problem with communication $O(k/\eps \cdot \log N \log^2(1/\eps))$.  In
this paper, we show in Section~\ref{sec:rank-estimation} how randomization
can bring this down to $O(\sqrt{k}/\eps \cdot \log N
\log^{1.5}(1/\eps\sqrt{k}))$, which is again optimal ignoring
$\polylog(1/\eps, k)$ factors.  Since rank-tracking is more general than
frequency-tracking, the previous lower bounds also hold here.  Our
algorithm uses space that is also close to the $\Omega(1/(\eps \sqrt{k}))$
lower bound.

Since we are talking about randomized algorithms with a constant success
probability, we should also compare with random sampling. It is well known
\cite{vc-ucrfe-71} that this probabilistic guarantee can be achieved for
all the problems above by taking a random sample of size $O(1/\eps^2)$.  A
random sample can be maintained continuously over distributed streams
\cite{cormode10:_optim}, solving these distributed tracking problems, with
a communication cost of $O(1/\eps^2 \cdot \log N)$.  This is worse than our
algorithms when $k = o(1/\eps^2)$.  As noted earlier, all the upper bounds
we have mentioned above have a hidden additive $O(k\log N)$ term, including
that for the random sampling algorithm.  Thus when $k = \Omega(1/\eps^2)$,
all of them boil down to $O(k \log N)$, while $\Omega(k)$ is an easy lower
bound for all these problems (see Theorem~\ref{th:k-lower}).  This means
that when $k=\Omega(1/\eps^2)$, all problems can be solved optimally by
just random sampling, up to an $O(\log N)$ factor.  Therefore,
$k=o(1/\eps^2)$ is the more interesting case worthy of studying.  In
addition, as the error (in particular for the frequency-tracking and the
rank-tracking problems) is in terms of $n$, the current size of the {\em
  entire} data set, typical values of $\eps$ are quite small.  For example,
$\eps=10^{-2} \sim 10^{-4}$ was used in the experimental study
\cite{cormode08:_findin1} for these problems in the streaming model; while
$k$ usually ranges from $10$ to $10^4$.  Thus we will assume $k\le
1/\eps^2$ in all the upper bounds throughout the paper.

The idea behind all our algorithms is very simple.  Instead of
deterministic algorithms, we use randomized algorithms that produce
unbiased estimators for $n_i$, the frequencies, and ranks with variance
$(\eps n)^2/k$, leading to an overall variance of $(\eps n)^2$, which is
sufficient to produce an estimate within error $\eps n$ with constant
probability.  This means we can afford an error of $\eps n / \sqrt{k}$ from
each site, as opposed to $\eps n / k$ for deterministic algorithms.  This
is essentially where we obtain the $\sqrt{k}$-factor improvement by
randomization.  Our algorithms are simple and extremely lightweight, in
particular the count-tracking and frequency-tracking algorithms, thus can be
easily implemented in power-limited distributed systems like wireless sensor networks.

%%  In fact, our algorithms only need $O(k\log
%% n)$ communication for the coordination, while the $O(\sqrt{k}/\eps)$ part
%% is entirely due to the transmission of the $k$ data structures to the
%% coordinator at the end.  Besides the benefits over distributed streams,
%% the unbiasedness of our estimators might be desirable in certain
%% applications, which could be of independent interest.

\subsection{Other related work}
As distributed tracking is closely related to the streaming and the
$k$-party communication model, it could be enlightening to compare with
the known results of the above problems in these models.  As mentioned
earlier, the count-tracking problem is trivial in both models, requiring
$O(1)$ space in the streaming model and $O(k)$ communication in the
$k$-party communication model.  

Both the frequency-tracking and rank-tracking problems have been
extensively studied in the streaming model with a long history.  The former
was first resolved by the MG algorithm \cite{misra:finding} with the
optimal space $O(1/\eps)$, though several other algorithms with the same
space bound have been proposed later on \cite{manku02:_approx,metwally06}.
The rank problem is also one of the earliest problems studied in the
streaming model \cite{munro80:_selec}.  The best deterministic algorithm to
date is the one by Greenwald and Khana \cite{greenwald01:_space}.  It uses
$O(1/\eps \cdot \log n)$ working space to maintain a structure of size
$O(1/\eps)$, from which any rank can be estimated with error $\eps n$.
Note that the rank problem is often studied as the {\em quantiles} problem
in the literature.  Recall that for any $0\le \phi\le1$, the
$\phi$-quantile of $D$ is the element in $A(t)$ that ranks at $\lfloor \phi
n \rfloor$, while an $\eps$-approximate $\phi$-quantile is any element that
ranks between $(\phi-\eps)n$ and $(\phi+\eps)n$.  Clearly, if we have the
data structure for one problem, we can do a binary search to solve the
other.  Thus the two problems are equivalent, for deterministic algorithms.
For algorithms with probabilistic guarantees, we need all $O(\log(1/\eps))$
decisions in the binary search to succeed, which requires the failure
probability to be lowered by an $O(\log(1/\eps))$ factor.  By running
$O(\log\log(1/\eps))$ independent copies of the algorithm, this is not a
problem.  So the two problems differ by at most a factor of
$O(\log\log(1/\eps))$.

The existing streaming algorithms for the frequency and rank problems can
be used to solve the one-shot version of the problem in the $k$-party
communication model easily.  More precisely, we use a streaming algorithm
to summarize the data set at each site with a structure of size
$O(1/\eps)$, and then send the these summary structures to the coordinator,
resulting in a communication cost of $O(k/\eps)$.  Recently, we designed
randomized algorithms for these two problems with $O(\sqrt{k}/\eps)$
communication \cite{huang11,huang11:_optim}, which have just been shown to
be near-optimal in an unpublished manuscript \cite{verbin:_tight}.  Thus, the
results in this paper demonstrate that, the seemingly much more challenging
tracking problem, which requires us to solve the one-shot problem
continuously at all times, is only harder by an $\Theta(\log N)$
factor (except for the count-tracking problem, which is much harder than
its one-shot version). 

Finally, we should mention that all these distributed tracking problems
have been studied in the database community previously, but mostly using
heuristics.  Keralapura et al.~\cite{keralapura06} approached the
count-tracking problem using prediction models, which do not work under
adversarial inputs.  Babcock and Olston \cite{Babcock:Olston:03} studied
the top-$k$ tracking problem, a variant of the frequency (heavy hitters) tracking
problem, but did not offer a theoretical analysis.  The rank-tracking
problem was first studied by Cormode et
al.~\cite{Cormode:Garofalakis:Muthukrishnan:Rastogi:05}; their algorithm
has a communication cost of $O(k/\eps^2 \cdot \log N)$ under certain
inputs.

%%% Local Variables:
%%% mode: latex
%%% TeX-master: "paper"
%%% End:

\section{Tracking Distributed Count}

\subsection{The algorithm}
\label{sec:track-distr-sum}

\paragraph{The algorithm with a fixed $p$}
Let $p$ be a parameter to be determined later.  For now we will assume that
$p$ is fixed. The algorithm is very simple: Whenever site $S_i$ receives an
element (hence $n_i$ gets incremented by one), it sends the latest value of
$n_i$ to the coordinator with probability $p$.  Let $\bar{n}_i$ be the last
updated value of $n_i$ received by the coordinator.  We first
estimate each $n_i$ by
\begin{equation}
\label{eq:1}
 \hat{n}_i  = \left\{\begin{array}{ll}
 \bar{n}_i - 1 + 1/p, & \textrm{if }\bar{n}_i\textrm{ exists}; \\
0, & \textrm{else.}
\end{array}\right.\end{equation}
Then we estimate $n$ as $\hat{n} = \sum_i \hat{n}_i$.

\paragraph{Analysis}
As mentioned in the introduction, our analysis will hold for any given one
time instance.  It is also important to note that this given time instance
shall not depend on the randomization internal to the algorithm.

We show that each $\hat{n}_i$ is an unbiased estimator of $n_i$ with
variance at most $1/p^2$.  This is very intuitive, since $n_i -
\bar{n}_i$ is the number of failed trials until the site decides to send an
update to the coordinator, when we look backward from the current time
instance.  This follows a geometric distribution with parameter $p$, but
not quite, as it is bounded by $n_i$.  This is why we need to separate the
two cases in \eqref{eq:1}.  A more careful analysis is given below:

\begin{lemma} \label{sample}
$\E[\hat{n}_i] = n_i$; $\var[\hat{n}_i] \le 1/p^2$.
\end{lemma}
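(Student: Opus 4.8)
The plan is to analyze the random variable $D_i = n_i - \bar{n}_i$, the "gap" between the true count and the last value the coordinator received, conditioned on the fact that we are looking at a fixed time instance. First I would observe that, reading backwards in time from the current instance, the player sent an update at the step corresponding to $\bar{n}_i$, and then failed to send an update on each of the subsequent $D_i$ increments. Since each increment triggers a send independently with probability $p$, the variable $D_i$ behaves like a geometric random variable with parameter $p$ — \emph{except} that it is truncated at $n_i$, because there have only been $n_i$ increments in total. So the clean statement is: $D_i = \min(G, n_i)$ where $G$ is geometric on $\{0, 1, 2, \dots\}$ with $\Pr[G = j] = p(1-p)^j$, and moreover the event "$\bar{n}_i$ does not exist" is exactly the event $G \ge n_i$ (no update has ever been sent), which is why $\hat{n}_i$ is set to $0$ there.

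**The unbiasedness computation.** For the mean, I would write $\E[\hat{n}_i]$ by splitting on the two cases of \eqref{eq:1}. When $\bar{n}_i$ exists (i.e.\ $G = j$ for some $0 \le j \le n_i - 1$), we have $\hat{n}_i = \bar{n}_i - 1 + 1/p = n_i - j - 1 + 1/p$; when $\bar{n}_i$ does not exist ($G \ge n_i$, which has probability $(1-p)^{n_i}$), we have $\hat{n}_i = 0$. So
\[
\E[\hat{n}_i] = \sum_{j=0}^{n_i-1} p(1-p)^j \left(n_i - j - 1 + \tfrac{1}{p}\right).
\]
The goal is to show this telescopes/sums to exactly $n_i$. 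The natural trick is to recognize $n_i - j - 1 + 1/p = (n_i - 1 + 1/p) - j$ and use the two standard identities $\sum_{j=0}^{n_i-1} p(1-p)^j = 1 - (1-p)^{n_i}$ and $\sum_{j=0}^{n_i-1} j\, p(1-p)^j = \frac{1-p}{p}\bigl(1 - (1-p)^{n_i}\bigr) - n_i(1-p)^{n_i}$; after substituting and collecting the $(1-p)^{n_i}$ terms, everything should cancel down to $n_i$. Alternatively — and this is probably the cleaner writeup — note that $\min(G, n_i)$ and $\min(G, \infty) = G$ have the same "backward gap" interpretation, and one can compute $\E[n_i - D_i + (1/p - 1) \cdot \mathbf{1}[\bar{n}_i \text{ exists}]]$ by a coupling argument, but I would likely just do the direct sum.

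**The variance bound.** For the variance, the key point is that $\hat{n}_i$ differs from the "untruncated" estimator only by clipping, and clipping can only reduce spread. Concretely, I would argue $\var[\hat{n}_i] \le \var[n_i - G + (1/p - 1)] = \var[G] = \frac{1-p}{p^2} \le \frac{1}{p^2}$. To make "clipping reduces variance" rigorous here, I would couple $\hat{n}_i$ with the untruncated version $Y = n_i - G - 1 + 1/p$: on the event $G \le n_i - 1$ they agree, and on the event $G \ge n_i$ we have $\hat{n}_i = 0$ while $Y = n_i - G - 1 + 1/p \le 1/p - 1 < 1/p$, and also $\hat{n}_i = 0$ is within $[0, n_i]$, the range containing $\E[\hat n_i] = n_i$... — actually the cleanest route is: since $\E[\hat{n}_i] = \E[Y] = n_i$, we have $\var[\hat{n}_i] = \E[\hat{n}_i^2] - n_i^2 \le \E[Y^2] - n_i^2 = \var[Y]$ provided $\E[\hat{n}_i^2] \le \E[Y^2]$ pointwise-ish; and pointwise $\hat n_i^2 \le Y^2$ fails in general, so instead I'd bound $\E[\hat n_i^2]$ directly via the same geometric sums. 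I expect \textbf{the variance bound to be the main obstacle}, since the truncation breaks the cleanest "variance of a geometric" argument and one must either (a) carefully verify that the truncated second moment is dominated, or (b) grind through $\sum_{j=0}^{n_i-1} p(1-p)^j (n_i - j - 1 + 1/p)^2$ and check it is at most $1/p^2$ after the $(1-p)^{n_i}$ terms are accounted for. I would present route (b): expand, use the identities for $\sum j p(1-p)^j$ and $\sum j^2 p(1-p)^j$, and confirm the leftover tail terms are non-positive, giving $\var[\hat n_i] \le \frac{1-p}{p^2} \le \frac{1}{p^2}$.
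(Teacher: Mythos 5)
Your proposal is correct and takes essentially the same route as the paper: you identify the truncated-geometric distribution of $n_i - \bar n_i$ (with the no-update event as the $\ge n_i$ tail) and verify both moments by direct summation over the pmf. The paper simply packages the algebra by introducing $X$ with $\hat n_i = n_i - X + 1/p$ and computing $\E[(X-1/p)^2]$, which comes out in closed form to exactly $(1-p)(1-(1-p)^{n_i})/p^2 \le 1/p^2$, so the clipping/coupling detour you (rightly) abandon is unnecessary.
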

\begin{proof}
Define the random variable
\[
X = \left\{
\begin{array}{ll}
n_i - \bar{n}_i + 1, & \textrm{if $\bar{n}_i$ exists;}\\
n_i+1/p, & \textrm{else.}
\end{array}\right. \]
Now we can rewrite $\hat{n}_i$ as $\hat{n}_i = n_i - X + 1/p$.  Thus it
suffices to show that $\E[X] = 1/p$ and $\var[X]\le 1/p^2$.  Letting $t=n_i
-\bar{n}_i+1$, we have
\begin{align*}
\E[X] &= \sum_{t=1}^{n_i} (t(1-p)^{t-1} p) + (n_i+1/p)(1-p)^{n_i} = \frac{1}{p}.\\
%            &= \frac{1-(1-p)^{n_i}}{p} - n_i(1-p)^{n_i} +
%            (n_i+1/p)(1-p)^{n_i} = \frac{1}{p}.
\var[X] &= \sum_{t=1}^{n_i} ((t - 1/p)^2(1-p)^{t-1} p) + (n_i+1/p - 1/p)^2(1-p)^{n_i}\\
            &= \frac{(1-p)(1-(1-p)^{n_i})}{p^2} \le \frac{1}{p^2}.
\end{align*}
\end{proof}

By Lemma~\ref{sample}, we know that $\hat{n}$ is an unbiased estimator of
$n$ with variance $\le k/p^2$.  Thus, if $p=\sqrt{k}/\eps n$, the variance
of $\hat{n}$ will be $(\eps n)^2$, which means that $\hat{n}$ has error at
most $2\eps n$ with probability at least $3/4$, by Chebyshev inequality.
Rescaling $\eps$ and $p$ by a constant will reduce the error to $\eps n$
and improves the success probability to $0.9$, as desired.  Here we also
see that separating the two cases in \eqref{eq:1} is actually important.
Otherwise, when $n_i = \Theta(\eps n / \sqrt{k})$, there would be a
constant probability that $\bar{n}_i$ does not exist, leading to a bias of
$\Theta(1/p) = \Theta(\eps n / \sqrt{k})$.  Summing over all $k$ sites,
this would exceed our error requirement.

It is interesting to note that similar ideas were used to solve the
{\em one-shot} quantile problem over distributed data \cite{huang11}.

\paragraph{Dealing with a decreasing $p$}
It is not possible and necessary to set $p$ exactly to $\sqrt{k}/\eps n$.
From the analysis above, it should be clear that keeping
$p=\Theta(\sqrt{k}/\eps n)$ will suffice.  To do so, we first track $n$
within a constant factor.  This can be done efficiently as follows.  Each
site $S_i$ keeps track of its own counter $n_i$.  Whenever $n_i$ doubles,
it sends an update to the coordinator. The coordinator sets
$n'=\sum_{i=1}^k n_i'$, where $n_i'$ is the last update of $n_i$.  When
$n'$ doubles (more precisely, when $n'$ changes by a factor between $2$ and
$4$), the coordinator broadcasts $n'$ to all the sites.  Let $\bar{n}$ be
the last broadcast value of $n'$.  It is clear that $\bar{n}$ is always a
constant-factor approximation of $n$.  The communication cost is $O(k\log
N)$, since each site sends $O(\log N)$ updates to the coordinator and the
coordinator broadcasts $O(\log N)$ times, each of which costs $k$ messages.
These broadcasts divide the whole tracking period into $O(\log N)$ rounds,
and within each round, $n$ stays within a constant factor of $\bar{n}$, the
broadcast value at the beginning of the round.

Now, when $\bar{n} \le \sqrt{k}/\eps$, we set $p=1$.  This causes all the
first $O(\sqrt{k}/\eps)$ elements to be sent to the coordinator.  When
$\bar{n} > \sqrt{k}/\eps$, we set $p = 1/\lfloor \eps \bar{n} / \sqrt{k}
\rfloor_2$, where $\lfloor x \rfloor_2$ denotes the largest power of $2$
smaller than $x$.  Since $\bar{n}$ is monotonically increasing, $p$ gets
halved over the rounds.  At the beginning of a round, if the new $p$ is
half\footnote{To be more precise, the new $p$ might also be a quarter of
  the previous $p$, but it can be handled similarly. } of that in the
previous round, each site $S_i$ adjusts its $\bar{n}_i$ appropriately, as
follows.  First with probability $1/2$, the site decides if $\bar{n}_i$
remains the same.  If so, nothing changes; otherwise, it repeatedly flips a
coin with probability $1/p$ (with the new $p$).  Every failed coin flip
decrements $\bar{n}_i$ by one.  It does so until a successful coin flip, or
$\bar{n}_i=0$.  Finally, the site informs the coordinator of the new value
of $\bar{n}_i$; if $\bar{n}_i=0$, the coordinator will treat it as if
$\bar{n}_i$ does not exist.  It should be clear that after this adjustment,
the whole system looks as if it had always been running with the new $p$.

It is easy to see that the communication cost in each round is $O(k + p n)
= O(k+\sqrt{k}/\eps) = O(\sqrt{k}/\eps)$, thus the total cost is
$O(\sqrt{k}/\eps\cdot \log N)$.

\begin{theorem}
\label{thm:sum}
  There is an algorithm for the count-tracking problem that, at any time,
  estimates $n=\sum_i n_i$ within error $\eps n$ with probability at least $0.9$.
  It uses $O(1)$ space at each site and $O(\sqrt{k}/\eps \cdot \log N)$
  total communication.
\end{theorem}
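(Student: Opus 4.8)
The plan is to assemble the theorem from the three ingredients developed above: the fixed‑$p$ estimator together with its second‑moment bound (Lemma~\ref{sample}), the doubling scheme that keeps a constant‑factor handle $\bar n$ on $n$, and the re‑randomization procedure that lets $p$ track $\Theta(\sqrt k/(\eps n))$. Throughout, fix an arbitrary query time $t$ that does not depend on the algorithm's internal coins (the input is presented by an oblivious adversary), so that conditioned on the stream the round structure and the value of $p$ in each round are deterministic.

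First I would record the properties of the doubling protocol: it maintains a value $\bar n$ with $\bar n\le n(t)\le 4\bar n$ at all times, costs $O(k\log N)$ communication in total, and partitions the timeline into $R=O(\log N)$ rounds; within a round $p$ is constant, and across rounds $p$ is weakly decreasing, halving (or quartering) at a round boundary. The core step is then the invariant: \emph{at every moment during a round whose sampling probability is $p$, the vector $(\bar n_1,\dots,\bar n_k)$ is distributed exactly as it would be had the fixed‑$p$ algorithm of Section~\ref{sec:track-distr-sum} been run from the start with this $p$.} I would prove this by induction on the rounds. The base case is the first round, where $p=1$, every increment is forwarded, and $\bar n_i=n_i$ deterministically, matching the fixed‑$p$ law. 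For the inductive step, consider a boundary where $p$ becomes $p'=p/2$. In the fixed‑$p'$ world, $D_i:=n_i-\bar n_i+1$ is the index of the first ``sent'' increment counting backwards, the sent/not‑sent indicators $b_1,\dots,b_{n_i}$ are i.i.d.\ $\mathrm{Bernoulli}(p')$, and hence $D_i$ is a $\mathrm{Geometric}(p')$ variable truncated at $n_i$, the truncated mass being the ``$\bar n_i$ does not exist'' atom. Coupling the old indicators with fresh fair coins $c_j$ via $b'_j=b_j c_j$ produces exactly $\mathrm{Bernoulli}(p')$ indicators, and the first index with $b'_j=1$ either coincides with the old first index (probability $\tfrac12$) or lies strictly beyond it, at a further backward offset that is again $\mathrm{Geometric}(p')$ truncated at $n_i$ — which is exactly the distribution the re‑randomization step produces (keep $\bar n_i$ with probability $\tfrac12$; otherwise flip $\mathrm{Bernoulli}(p')$ coins, decrementing $\bar n_i$ on each failure and stopping on a success or at $0$). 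Once each site reports its refreshed $\bar n_i$, the system is in the fixed‑$p'$ distribution, completing the induction; the quartering case follows by applying the halving step twice.

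Given the invariant, correctness is immediate. At time $t$ we are in some round with $p=1/\lfloor \eps\bar n/\sqrt k\rfloor_2=\Theta(\sqrt k/(\eps n))$ and $(\bar n_1,\dots,\bar n_k)$ has the fixed‑$p$ law, so by Lemma~\ref{sample} each $\hat n_i$ is unbiased with variance $\le 1/p^2$; the sites use independent coins, hence $\hat n=\sum_i\hat n_i$ is unbiased with $\var[\hat n]\le k/p^2=O((\eps n)^2)$, and after fixing the constant hidden in $p$ so that $\var[\hat n]\le (\eps n)^2/10$, Chebyshev's inequality gives $|\hat n-n|\le \eps n$ with probability $\ge 0.9$ (this is the rescaling of $\eps$ and $p$ by constants). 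For the budget: the doubling scheme costs $O(k\log N)$; within a round the re‑randomization costs $O(k)$ (one message per site plus one broadcast) and the sampled updates cost $O(pn)=O(\sqrt k/\eps)$ in expectation — with the special case $\bar n\le\sqrt k/\eps$, where $p=1$ and only $O(\sqrt k/\eps)$ elements are forwarded — so over $R=O(\log N)$ rounds the total is $O(\sqrt k/\eps\cdot\log N)$, and a Chernoff bound on the Binomial update counts upgrades this to a high‑probability bound. Each site stores only $n_i$, $\bar n_i$, the current $p$ and $\bar n$, and a few counters, i.e.\ $O(1)$ words. The step I expect to be delicate is the coupling in the inductive invariant: getting the off‑by‑one in the ``keep versus decrement'' split right and matching the truncation at $\bar n_i=0$ to the ``does not exist'' atom, since everything downstream relies on the fixed‑$p$ law holding \emph{exactly}.
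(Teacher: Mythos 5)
Your proof is correct and takes essentially the same route as the paper: fixed-$p$ unbiasedness/variance from Lemma~\ref{sample}, the $O(k\log N)$ doubling scheme to maintain $\bar n$ and define rounds, and re-randomization at round boundaries so that the system state always matches the fixed-$p$ law, followed by Chebyshev and the per-round cost accounting. The only addition is that you spell out the coupling ($b'_j = b_j c_j$) behind the invariant that the paper merely asserts (``the whole system looks as if it had always been running with the new $p$''); that coupling is sound, modulo the small slip that the residual geometric offset after a failed keep-coin is truncated at the old $\bar n_i$, not at $n_i$.
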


\subsection{The lower bound} 
\label{sec:lower-bound}
Before proving the lower bounds, we first state our lower bound model
formally, in the context of the count-tracking problem.  The $N$ elements
arrive at the $k$ sites in an online fashion at arbitrary time instances.
We do not allow spontaneous communication.  More precisely, it means that a
site is allowed to send out a message only if it has just received an
element or a message from the coordinator.  Likewise, the coordinator is
allowed to send out messages only if it has just received messages from one
or more sites.  When a site $S_j$ is allowed to send out a message, it
decides whether it will indeed do so and the content of the message, based
only on its local counter $n_j$ and the message history between $S_j$ and
the coordinator, possibly using some random source.  We assume that the
site does not look at the current clock.  We argue that the clock conveys
no information since the elements arrive at arbitrary and unpredictable
time instances.  (If the elements arrive in a predictable fashion, say, one
per time step, the problem can be solved without communication al all.)
Similarly, when the coordinator is allowed to send out messages, it makes
the decision on where and what to send based only on its message history
and some random source.  We will lower bound the communication cost only by
the number of messages, regardless of the message size.

%One may think that the current time step (i.e., the clock) when an item
%arrives should also be taken into consideration. But the clock essentially
%conveys no information since we allow items to arrive at arbitrary time
%steps.

\subsubsection{One-way communication lower bound}
In this section we show that two-way communication is necessary to achieve
the upper bound in Theorem~\ref{thm:sum}, by proving the following lower bound.  Remember that we assume $N$ is sufficiently larger than $k$ and $1/\eps$.

\begin{theorem}
\label{thm:one-way}
If only the sites can send messages to the coordinator but not vice versa,
then any randomized algorithm for the count-tracking problem that, at any
time, estimates $n$ within error $\eps n$ with probability at least $0.9$
must send $\Omega(k/\eps\cdot\log N)$ messages. 
\end{theorem}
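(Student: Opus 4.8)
The plan is to establish the $\Omega(k/\eps\cdot\log N)$ lower bound via a direct-sum style argument: show that even tracking a single site's count to within error $\eps n$, when that site's contribution can be as large as the whole count, forces $\Omega((1/\eps)\log N)$ messages from that site, and then argue the $k$ sites cannot share the work when communication is one-way. First I would set up a hard instance in $\log N$ "epochs": in epoch $\ell$ the total count roughly doubles, so $n\approx 2^\ell$, and within each epoch I want to force $\Omega(1/\eps)$ messages. The key point of one-way communication is that a site cannot be told by the coordinator whether its updates still matter, so each site must independently be accurate enough to defend against an adversary that stops feeding every other site.

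The core single-epoch argument is the following adversarial construction. Fix an epoch where $n$ is around some value $m$ (so the error budget is $\eps m$). Adversarially, I feed elements to site $S_1$ only, one at a time, letting $n_1$ grow from $m$ to $2m$; at each moment $S_1$'s protocol (a randomized decision based only on $n_1$ and its message history) determines whether it sends a message. Because communication is one-way and the other sites are silent, the coordinator's estimate $\hat n$ is a function only of the messages $S_1$ has sent so far. If between two consecutive messages of $S_1$ the counter $n_1$ advances by more than $2\eps m$, then the adversary picks the "bad" time in that gap: the coordinator's estimate is frozen over a window of width $>2\eps m$ in $n_1$, hence over a window of width $>2\eps m$ in $n=n_1+(\text{fixed silent mass})$, so it is wrong by more than $\eps m$ at one of the endpoints (or, more carefully, at some interior point — but an interval of length $>2\eps n/ (1\pm\eps)$ cannot be covered by a single $\eps$-approximation). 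Thus, to succeed with probability $\ge 0.9$ at every time instance of the epoch, $S_1$ must send $\Omega(m/(\eps m))=\Omega(1/\eps)$ messages in expectation during that epoch. Summing over the $\Theta(\log N)$ epochs (with $m$ doubling each time, which only changes the additive "silent mass" setup, not the per-epoch count) gives $\Omega((1/\eps)\log N)$ messages forced out of $S_1$ alone.

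To get the factor $k$, I would run this construction on all $k$ sites simultaneously but \emph{interleaved} so that the adversary can, retroactively, choose any one site to be the "active" one while the rest are held fixed. Concretely: in each epoch, each site's counter is driven up by the same amount $\Theta(m/k)$ worth of increments (so the total is $\Theta(m)$), and the adversary reserves the right to "freeze" all sites but $S_j$ after some prefix. Since communication is one-way, $S_j$'s sending behavior depends only on its own input and message history, so it cannot tell whether it is the chosen site; hence each $S_j$ must, on its own, maintain enough resolution that its last-reported value is within $\pm\eps m/2$ of its true count throughout — otherwise freezing everyone else exposes an $\eps m$ error at some site. Rescaling the error parameter by a constant, this forces $\Omega(1/\eps)$ messages per site per epoch, i.e.\ $\Omega(k/\eps)$ per epoch and $\Omega(k/\eps\cdot\log N)$ overall. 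I would formalize "cannot tell whether it is chosen" using a standard averaging/Yao argument: put a distribution on which site is frozen when, and observe that each site's transcript distribution is independent of that choice, so its expected message count under the mixture equals its message count in the worst case.

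The main obstacle I expect is making the "frozen window forces an error" step fully rigorous in the presence of the other sites' masses and the coordinator's randomness, and in particular ruling out that the coordinator cleverly uses the \emph{timing} of messages (or the pattern across sites) to compensate — this is exactly where the one-way restriction and the "no clock / adversarial arrival times" assumptions from the lower-bound model must be invoked: the coordinator's estimate is a function of the multiset of messages received, not of real time, so a site that has gone silent for a long stretch of its own counter genuinely leaves the coordinator blind over that stretch. A secondary technical point is handling the multiplicative ($\eps n$ rather than $\eps m$) error across an epoch where $n$ itself moves by a factor of $2$; I would resolve this by using shorter sub-epochs where $n$ changes by only a $(1+\Theta(\eps))$ factor is wasteful, so instead I take $n$ to move by a constant factor and absorb the resulting constant into the rescaling of $\eps$, noting that an interval over which $n$ varies within a constant factor $c$ cannot be $\eps$-approximated by one value once its width exceeds $2\eps n$ for the \emph{smallest} $n$ in the interval.
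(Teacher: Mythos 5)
Your proposal follows the same high-level plan as the paper's proof: apply Yao's minimax principle to pass to deterministic algorithms, observe that one-way communication forces each site's sending behavior to be a fixed sequence of thresholds on its local counter $n_j$, use an adversarial input where any one site might be the sole recipient to force the thresholds to be dense, and then charge messages via a round-robin arrival pattern. The paper formalizes this with a two-part hard distribution — case (a) all $N$ elements at one uniformly random site, case (b) round-robin — and rounds $W_{i+1}=(1+\eps)W_i$ starting from $W_1=k/\eps$, showing that in each of the $\Theta((1/\eps)\log N)$ rounds, at least $k/2$ sites must cross a threshold.

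There is, however, a genuine gap in your accounting for the multi-site step. You claim that ``each $S_j$ must, on its own, maintain enough resolution that its last-reported value is within $\pm\eps m/2$ of its true count,'' where $m$ is the \emph{total} count, and conclude from this that each site sends $\Omega(1/\eps)$ messages per epoch. But an additive slack of $\eps m$ on a counter $n_j\approx m/k$ is very coarse: over a range of width $m/k$ it only forces $\Theta\!\left(\frac{m/k}{\eps m}\right)=\Theta\!\left(\frac{1}{\eps k}\right)$ thresholds to be crossed, which after summing over sites and epochs gives only $\Omega((1/\eps)\log N)$ — short of the target by a factor of $k$. The correct density requirement is \emph{multiplicative}: the thresholds $t_j^1<t_j^2<\cdots$ must satisfy $t_j^{i+1}\le(1+O(\eps))\,t_j^i$, so that the site crosses $\Theta(1/\eps)$ thresholds whenever its \emph{own} counter doubles. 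To force this, the adversary's ``focus on one site'' case must feed all elements to $S_j$ \emph{from the very start} (not after a round-robin prefix): then $n=n_j$ throughout, and the error budget at the moment $n_j=v$ is $\eps v$, a fraction of the site's own counter. If instead all other sites are frozen only after a prefix of total size $m'$, the error budget at $n_j=v$ is $\eps(v+(k-1)m'/k)\approx\eps m'$ for $v\lesssim m'$, which lets the thresholds be $k$ times coarser — exactly the factor you lose. Your final paragraph gestures at a multiplicative-vs-additive issue, but it addresses the variation of $n$ within an epoch (a bounded constant-factor effect), not the crucial multiplicative density of the per-site threshold sequence.
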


\begin{proof}
We first define the hard input distribution $\mu$.
\begin{enumerate}
\item[(a)] With probability $1/2$, all elements arrive at one site that is uniformly picked at random.
\item[(b)] Otherwise, the $N$ elements arrive at the $k$ sites in a round-robin fashion, each site receiving $N/k$ elements in the end.
\end{enumerate}
By Yao's Minimax principle~\cite{yao77}, we only need to argue that any
deterministic algorithm with success probability at least $0.8$ under $\mu$ has expected cost $\Omega(k/\eps \cdot \log N)$.

Note that when only one-way communication is allowed, a site decides whether
to send messages to the coordinator only based on its local counter $n_j$.  Thus the communication pattern can be essentially described as follows.  Each site $S_j$ has a series of thresholds $t_j^1, t_j^2, \ldots$ such that when
$n_j = t_j^i$, the site sends the $i$-th message to the
coordinator. These thresholds should be fixed at the beginning.

We lower bound the communication cost by rounds.  Let $W_i$ be the number of elements that have arrived up until round $i$.  We divide the rounds by setting $W_1=k/\eps$, and $W_{i+1} = \lceil (1+\eps) W_i \rceil$ for $i\ge 1$.  Thus there are $1/\eps \cdot \log(\eps N / k)$ rounds, which is $\Omega(1/\eps \cdot \log N)$ for sufficiently large $N$.

At the beginning of round $i+1$, suppose that $S_1, S_2, \ldots, S_k$ have
already sent $z_1^i, z_2^i, \ldots, z_k^i$ messages to the coordinator,
respectively. Let $t_{\max}^{i+1} = (1 + \eps) \cdot \max\{t_j^{z_j^i}\ |\ 
j = 1, 2, \ldots, k\}$.  We first observe that there must be at least $k/2$
sites with their next threshold $t_j^{z_j^i+1} \le t_{\max}^{i+1}$.
Otherwise, suppose there are less than $k/2$ sites with such next
thresholds, then with probability at least $1/4$ case (a) happens and the
random site $S_j$ chosen to receive all elements has $t_j^{z_j^i+1} >
t_{\max}^{i+1}\ge (1+\eps) t_j^{z_j^i}$.  Thus, with probability at least
$1/4$ the algorithm fails when the $t_{\max}^{i+1}$-th element arrives,
contradicting the success guarantee.

On the other hand, with probability $1/2$ case (b)
happens. In this case all $t_j^{z_j^i}\ (j = 1, 2, \ldots, k)$ are
no more than $W_i/k$, since in case (b), elements arrive at all $k$ sites in
turn. In the next $\eps W_i$ elements, each site $S_j$ receives $\eps W_i/k$ elements.  If the site $S_j$ has 
$t_j^{z_j^i+1} \le t_{\max}^{i+1}$, then it must send a message in this round, since $W_i/k + \eps W_i/k \ge t_{\max}^{i+1} \ge t_j^{z_j^i+1}$, that is, its $(z_j^i+1)$-th threshold is triggered.  As argued, there are $\ge k/2$ sites with $t_j^{z_j^i+1} \le t_{\max}^{i+1}$, so the
communication cost in this round is at least $k/2$.

Summing up all rounds,  the total communication is at least $\Omega(k/\eps
\cdot \log N)$.
\end{proof}

\subsubsection{Two-way communication lower bound}
Below we prove two randomized lower bounds when two-way communication is
allowed.  The first one justifies the assumption $k\le 1/\eps^2$, since
otherwise, random sampling will be near-optimal.

\begin{theorem}
\label{th:k-lower}
Any randomized algorithm for the count-tracking problem that, at any time,
estimates $n$ within error $0.1n$ with probability at least $0.9$ must
exchange $\Omega(k)$ messages.
\end{theorem}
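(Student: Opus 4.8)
The plan is to prove the bound against a single, deterministic hard input sequence $\sigma^\*$, so that no appeal to Yao's principle is needed. The sequence has $k$ temporally separated phases: in phase $r$ (for $r=1,\dots,k$) site $S_r$ receives $3^r$ fresh elements and nothing else happens. Because ``communication is instant'', all messages triggered inside a phase settle before the next phase begins, and $N=\sum_{r=1}^k 3^r=\Theta(3^k)$ (the theorem puts no restriction on $N$, so this is allowed). The role of the geometric growth is that the bulk added in phase $r$ dominates everything before it: just before phase $r$ the true count is $n^-_r=\sum_{l<r}3^l<3^r/2$, while just after phase $r$ it is $n^+_r=\sum_{l\le r}3^l\ge 3^r$.

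Next I would show that, for each fixed $r$, a valid algorithm must have the coordinator receive at least one message during phase $r$ with probability at least $0.8$. Fix the two time instants $t^-_r$ (immediately before phase $r$) and $t^+_r$ (immediately after phase $r$); by hypothesis the estimate lies within a factor $1\pm 0.1$ of the true count at each of them with probability at least $0.9$, so by a union bound it is correct at both with probability at least $0.8$. On that event the estimate is at most $1.1\,n^-_r<0.55\cdot 3^r$ at $t^-_r$ but at least $0.9\,n^+_r\ge 0.9\cdot 3^r$ at $t^+_r$, hence it strictly increased during phase $r$. Since the coordinator takes no spontaneous action and does not read the clock, its output is a function of its message history alone, so this increase forces it to have received a message during phase $r$; and because communication is instant, that message was also sent during phase $r$. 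Therefore $\Pr[\text{at least one message in phase }r]\ge 0.8$.

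Finally, since the $k$ phases are disjoint in time, the total number of messages exchanged is at least the number of phases in which a message occurs, so by linearity of expectation the expected communication on $\sigma^\*$ is at least $\sum_{r=1}^k\Pr[\text{a message is sent in phase }r]\ge 0.8\,k=\Omega(k)$, which proves the theorem (for the standard model in which the coordinator is a separate party; the degenerate $k=1$ case is the streaming model and is not at issue here).

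I expect the only delicate point to be the middle step: arguing that the coordinator's reported value cannot change during a phase unless a message is sent in that phase. This is not a new technical difficulty but a careful invocation of the model conventions stated just before this theorem (no spontaneous communication, clock-obliviousness of the coordinator, and instantaneous communication), after which the geometric gap $n^+_r\ge 3^r>2\,n^-_r$ does the rest. One could alternatively route the argument through Yao's minimax principle using a distribution that activates a uniformly random half of the $k$ phases, but the direct argument above is shorter and already handles the weaker per-time correctness guarantee.
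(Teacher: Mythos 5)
Your proof is correct but takes a genuinely different route from the paper's. The paper reuses the two-case input distribution from Theorem~\ref{thm:one-way} (all mass at a random site vs.\ round-robin), invokes Yao's minimax principle, and reasons \emph{spatially}: it shows that at least $k/2$ sites must have their initial triggering threshold at $1$ (else case (a) fails with probability $\ge 1/4$), and in case (b) each of those sites must communicate at least once, giving expected communication $\ge k/4$. You instead reason \emph{temporally} against a single deterministic stream with $k$ geometrically growing phases ($3^r$ elements to $S_r$ in phase $r$), bypassing Yao entirely: the multiplicative gap $n^+_r \ge 3^r > 2n^-_r$ forces the coordinator's reported value to change across each phase on the good event (union bound $0.9+0.9-1=0.8$), and since the clock-oblivious coordinator's output is a function of its message history (and its fixed random tape), a change forces a received message, giving $\ge 0.8k$ messages in expectation by linearity. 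Both arguments give the same bound under the same model conventions (no spontaneous action, clock-obliviousness, instantaneous communication). Your version is shorter and avoids the ``triggering threshold'' abstraction and the minimax step; the paper's version has the advantage of needing only $N = \Theta(k)$ elements and of reusing machinery already set up for Theorem~\ref{thm:one-way}, whereas yours needs $N = \Theta(3^k)$ — harmless here since the theorem places no upper bound on $N$ (and the paper already assumes $N$ is large relative to $k$), but worth noting if one cared about the dependence on $N$.
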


\begin{proof}
The hard input distribution is the same as that in the proof of Theorem~\ref{thm:one-way}. To prove this lower bound we are only interested in the number of sites that communicate with the coordinator at least once. Before any element arrives, we can still assume that each site keeps a triggering threshold. The thresholds of $S_j$ shall remain the same unless it communicates with the coordinator at least once. We argue that there must be at least $k/2$ sites whose triggering threshold is no more than $1$, since otherwise if case (a) happens and the randomly chosen site is one with a triggering threshold larger than $1$, the algorithm will fail, which would happen with probability at least $1/4$.  On the other hand, if case (b) happens, then all the sites with threshold $1$ will
have to communicate with the coordinator at least once: either their thresholds are triggered by the round-robin arrival of elements, or they receive a message from the coordinator, which can possibly change their threshold.
\end{proof}

Finally, we show that the upper bound in Theorem~\ref{thm:sum} is
asymptotically tight. We first introduce the following primitive problem.

\begin{definition}[$1$-bit]
Let $s$ be either $k/2 + \sqrt{k}$ or $k/2 - \sqrt{k}$, each with probability $1/2$. 
From the $k$ sites, a subset of $s$
sites picked uniformly at random each have 
bit $1$, while the other $k - s$ sites have bit $0$. The goal
of the communication problem is for the coordinator to find out the
value of $s$ with probability at least $0.8$.
\end{definition}
We will show the following lower bound for this primitive problem.
\begin{lemma}
\label{lem:1bit}
Any deterministic algorithm that solves $1$-bit has distributional
communication complexity $\Omega(k)$.
\end{lemma}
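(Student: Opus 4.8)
The plan is to reduce from the two-party \textsc{gap-hamming-distance} (GHD) problem, or more directly to prove the bound from scratch via an information-theoretic argument, since the communication here is between the coordinator and $k$ essentially-symmetric sites. First I would observe that each site holds a single bit, and the coordinator must distinguish $s = k/2+\sqrt{k}$ from $s = k/2-\sqrt{k}$; since the bits are a uniformly random subset of the prescribed size, this is exactly the task of estimating the Hamming weight of a random string of length $k$ up to additive $\pm\sqrt{k}$, which is the regime where GHD-type lower bounds are tight. The natural formulation is: think of the $k$ bits as the input to a one-way or multi-round protocol, and show that the coordinator cannot decide $s$ with advantage unless $\Omega(k)$ bits (hence, in our unit-cost model, $\Omega(k)$ messages, since each message carries at least one bit) are exchanged.

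The key steps, in order, would be: (1) fix a deterministic protocol correct with probability $\ge 0.8$ under the stated distribution $\mu_{\mathrm{1bit}}$; (2) let $\Pi$ denote the transcript (the full sequence of messages between the coordinator and all sites) and argue, via Fano's inequality or a direct distinguishing-advantage argument, that $\Pi$ must reveal $\Omega(1)$ bits of information about $s$, hence about the weight of the bit-vector; (3) invoke a direct-sum / information-complexity bound showing that to learn the weight of $k$ independent (or near-independent) bits up to $\pm\sqrt{k}$, the transcript must carry $\Omega(k)$ bits of information about the input vector $(b_1,\dots,b_k)$ --- this is the heart of the GHD lower bound of Chakrabarti--Regev, adapted to the coordinator model where the ``Alice'' side is split across $k$ players; (4) conclude that $|\Pi| = \Omega(k)$ bits, and since in our model each message is a word but we only count messages, $\Omega(k)$ messages are needed. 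An alternative to step (3): directly embed the standard two-party GHD instance by having site-subsets $\{1,\dots,k/2\}$ and $\{k/2+1,\dots,k\}$ play the roles of Alice's and Bob's strings, routing all communication through the coordinator, so that any $o(k)$-message coordinator protocol yields an $o(k)$-bit two-party GHD protocol, contradicting the known $\Omega(k)$ bound.

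The main obstacle I anticipate is step (3)/(the embedding): the clean two-party GHD lower bound assumes Alice and Bob each hold a fixed half of the coordinates, whereas here the ``$1$''-bits form a \emph{random} subset of a \emph{fixed total size} $s \in \{k/2\pm\sqrt{k}\}$, so the bits are negatively correlated rather than independent, and the partition of coordinates into sites is symmetric rather than the asymmetric Alice/Bob split. Handling this requires either (i) a smoothing argument showing the fixed-weight distribution is close enough to the product distribution $\mathrm{Bernoulli}(1/2)^k$ conditioned on the relevant weight band that the GHD bound transfers, or (ii) re-running the Chakrabarti--Regev information-theoretic argument directly in the coordinator model, tracking the mutual information between the transcript and each coordinate $b_i$ and summing. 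I would take route (i) if the constants work out, falling back to (ii) otherwise; either way the bound $\Omega(k)$ on the number of messages follows, and this bound will feed into the final $\Omega(\sqrt{k}/\eps \cdot \log N)$ message lower bound by the standard technique of planting independent \textsc{1-bit} instances across $\Omega(\log N)$ scales and $\Omega(1/\eps)$ ``copies'' within each scale.
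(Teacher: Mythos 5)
Your plan and the paper's proof take genuinely different routes, and yours has gaps that the paper's argument is specifically designed to avoid.

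The paper's proof does not go through gap-Hamming or information complexity at all. It first observes that in the $1$-bit problem a site can send its entire input (one bit) the first time it communicates, so the only quantity of interest is the \emph{number of sites the coordinator ever talks to}; this immediately collapses ``messages'' to ``sites touched'' and makes the bits-vs-messages issue a non-issue. It then cleanly separates the protocol into a first phase of site-initiated messages and a second phase of coordinator-initiated probes (arguing that moving all type-(b) probes after all type-(a) messages only helps the coordinator), applies Markov to dispose of the first phase, and notes that by the symmetry of the hard distribution the second phase is equivalent to the coordinator sampling $z$ arbitrary remaining sites without replacement. The lemma then reduces to showing that $z=o(k)$ draws from a hypergeometric population cannot distinguish $s'=k'/2+\Theta(\sqrt{k})$ from $s'=k'/2-\Theta(\sqrt{k})$, which the appendix does with a normal approximation: the two hypergeometric means differ by $\Theta(z/\sqrt{k})$ while the standard deviation is $\Theta(\sqrt{z})$, so constant advantage forces $z/\sqrt{k}=\Omega(\sqrt{z})$, i.e.\ $z=\Omega(k)$. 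This is elementary and self-contained.

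Your proposal has two concrete problems. First, the direct two-party GHD embedding does not work as described: if you make sites $\{1,\dots,k/2\}$ hold Alice's $x$ and sites $\{k/2+1,\dots,k\}$ hold Bob's $y$, the quantity the $1$-bit coordinator computes is $\mathrm{wt}(x)+\mathrm{wt}(y)$, not $\Delta(x,y)$; these differ by $2|x\cap y|$, which neither the coordinator nor any grouping of sites can access from the bits they hold, so the reduction does not produce a GHD protocol. You flag this as ``the main obstacle I anticipate,'' but the fixes you sketch (a smoothing argument or re-running Chakrabarti--Regev in the coordinator model) are not carried out, and the second is a substantial project in itself. Second, your bits-to-messages conversion is stated backwards: ``$|\Pi|=\Omega(k)$ bits, and since each message carries at least one bit, $\Omega(k)$ messages'' only gives $|\Pi|\ge\#\text{messages}$, the wrong direction; in this model a single message may be arbitrarily long, so a bit lower bound does not by itself bound the number of messages. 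The correct route is the paper's observation (each site holds only one bit, so nothing is gained after a first exchange with it), or equivalently that each site-to-coordinator message conveys at most one bit of information about the input. So while an information-complexity argument in the spirit you describe can almost certainly be made to work, as written the proposal leaves the central step unproved and contains an incorrect inference, whereas the paper's hypergeometric sampling argument is complete and considerably more elementary.
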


Lemma~\ref{lem:1bit} immediately implies the following theorem:

\begin{theorem}
  Any randomized algorithm for the count-tracking problem that, at any
  time, estimates $n$ within error $\eps n$ with probability at least $0.9$
  must exchange $\Omega(\sqrt{k}/\eps\cdot \log N)$ messages, when
  $k<1/\eps^2$.
\end{theorem}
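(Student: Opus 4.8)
The plan is to embed $\Theta(\log N/(\eps\sqrt k))$ independent copies of the $1$-bit problem into a count-tracking instance — one copy per ``round'' — and to charge $\Omega(k)$ messages to each round using Lemma~\ref{lem:1bit}; the product $\Omega(k)\cdot\Omega(\log N/(\eps\sqrt k))=\Omega(\sqrt k/\eps\cdot\log N)$ is the claimed bound. As in the proof of Theorem~\ref{thm:one-way}, I pass to a deterministic algorithm $\A$ via Yao's principle and lower bound its expected number of messages on a distribution $\mu$ that feeds $\A$ a fresh, independent $1$-bit instance in each round.

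Concretely, set $W_1=k/\eps$ and $W_{i+1}=W_i+(k/2+\sqrt k)c_i$ with $c_i:=\lceil c\,\eps W_i/\sqrt k\rceil$ for a suitable constant $c$, so that $W_{i+1}/W_i=1+\Theta(\eps\sqrt k)$; since $\eps\sqrt k\le 1$ (this is where $k\le 1/\eps^2$ enters) there are $m=\Theta(\log N/(\eps\sqrt k))$ rounds before the count reaches $N$, for $N$ large enough. Round $i$ begins with the count equal to $W_i$ and has two parts. First, a fresh $1$-bit instance with parameter $s_i\in\{k/2\pm\sqrt k\}$ is encoded by delivering $c_i$ elements to each site whose bit is $1$ and none to the others, bringing the count to $W_i+s_ic_i$; we place a ``checkpoint'' here. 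Second, $(k/2+\sqrt k-s_i)c_i$ padding elements are delivered to a fixed site, restoring the count to exactly $W_{i+1}$ regardless of $s_i$, so that round boundaries do not depend on the randomness. Choosing $c$ large enough makes $2\sqrt k\,c_i>2\eps(W_i+s_ic_i)$, so the two candidate counts $W_i+(k/2\pm\sqrt k)c_i$ at checkpoint $i$ have disjoint $\eps$-approximation intervals; hence an $\eps$-accurate coordinator at checkpoint $i$ can read off $s_i$.

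It remains to show $\A$ sends $\Omega(k)$ messages during each round in expectation. Fix a round $i$ and fix the instances of all earlier rounds to an arbitrary value; since $\A$ and the (padded) prefix are deterministic, the entire state at the start of round $i$ — all counters and the message history on every channel — is a fixed string the parties may hardwire at no cost. Running round $i$ on a fresh $1$-bit instance and having the coordinator output $s_i$ at the checkpoint is then a deterministic $1$-bit protocol whose communication is exactly the number of messages $\A$ sends in round $i$. Since $\A$ is $\eps$-accurate at checkpoint $i$ with the success probability guaranteed under $\mu$, a Markov argument shows that for at least half of the prefixes the conditional success probability still exceeds the constant threshold needed by Lemma~\ref{lem:1bit} (that lemma holds for any success constant bounded away from $1/2$, with the same $\Omega(k)$ bound); for such prefixes its expected cost is $\Omega(k)$, and averaging over prefixes yields $\E_\mu[\text{number of messages in round }i]=\Omega(k)$. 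Summing over the $m$ rounds and invoking Yao's principle proves the theorem.

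I expect the main obstacle to be precisely this last ``direct sum'' step: isolating round $i$ from the rest of the execution so that Lemma~\ref{lem:1bit} applies to it alone, despite the fact that $\A$'s behavior in round $i$ depends on the history built up in rounds $1,\dots,i-1$ and that $\A$ is permitted to err on a small fraction of (time, instance) pairs. The ingredients that resolve this — determinizing by Yao, conditioning on and thus hardwiring the prefix, padding so that round boundaries are instance-independent, and a Markov bound to keep enough correctness after conditioning — are routine individually but must be combined carefully; the geometric round schedule, the interval-separation calculation, and the role of $k\le 1/\eps^2$ are then only bookkeeping.
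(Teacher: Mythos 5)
Your proposal matches the paper's proof: both reduce the count-tracking lower bound to $\Theta(\log N/(\eps\sqrt{k}))$ independent copies of the $1$-bit problem and invoke Lemma~\ref{lem:1bit} on each (the paper organizes them as $\log(\eps N/k)$ rounds of $1/(2\eps\sqrt{k})$ subrounds with $2^i$ elements per selected site, whereas you use a single geometric schedule with ratio $1+\Theta(\eps\sqrt{k})$ plus padding to make round boundaries instance-independent). You are somewhat more explicit than the paper about the conditioning/Markov step that isolates each $1$-bit instance from the history, which the paper glosses over, but the key lemma, the embedding, and the final product bound are the same.
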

\begin{proof}
We will again fix a hard input distribution first and then focus on the
distributional communication complexity of deterministic algorithms with success probability at most $0.8$. Let
$[m] = \{0, 1, \ldots, m-1\}$. The adversarial input consists of $\ell
= \log\frac{\eps N}{k} = \Omega(\log N)$ rounds. We further divide each round $i \in [\ell]$
into $r = 1/(2\eps\sqrt{k})$ subrounds.

The input at round $i \in [\ell]$ is constructed as follows, at each
subround $j \in [r]$, we first choose $s$ to be $k/2 + \sqrt{k}$ or $k/2 -
\sqrt{k}$ with equal probability. Then we choose $s$ sites out of the
$k$ sites uniformly at random and send $2^i$ elements to each of them (the order does not matter).

%At the end of the last subround, we also pad items
%(by sending them to arbitrary sites in an arbitrary order) until the
%total number of items reaches $\sqrt{k}/2\eps \cdot 2^{i+1}$.

It is easy to see that at the end of in each subround in round $i$, the
total number of items is no more than $\tau_i = \sqrt{k}/\eps \cdot 2^i$.
Thus after $s \cdot 2^i$ elements have arrived in a subround, the algorithm
has to correctly identify the value of $s$ with probability at least $0.8$,
since otherwise with probability at least $0.2$ the estimation of the
algorithm will deviate from the true value by at least $\sqrt{k} \cdot 2^i
> \eps \tau_i$, violating the success guarantee of the algorithm.  This is
exactly the 1-bit problem defined above.  By Lemma~\ref{lem:1bit}, the
communication cost of each subround is $\Omega(k)$.  Summing over all $r$
subrounds and then all $\ell$ rounds, we have that the total communication
is at least $\ell \cdot r \cdot \Omega(k) \ge \Omega(\sqrt{k}/\eps \cdot
\log N)$.
\end{proof}

Now we prove Lemma~\ref{lem:1bit}.
\begin{proof}(of Lemma~\ref{lem:1bit})
First of all, observe that whenever the coordinator communicates with
a site, the site can send its whole input (i.e., its only bit) to the
coordinator. After that, the coordinator knows all the information about
that site and does not need to communicate with it further. Therefore
all that we need to investigate is the number of sites the coordinator
needs to communicate with.

There can be two types of actions in the protocol.
\begin{enumerate}
\item[(a)] A site initiates a communication with the coordinator based on
  the bit it has.
\item[(b)] The coordinator, based on all the information it has gathered so far, asks some site to send its bit.
\end{enumerate}

Note that if a type (b) communication takes place before a type (a)
communication, we can always swap the two, since this only gives the
coordinator more information at an earlier stage.  Thus we can assume that
all the type (a) communications happen before type (b) ones.

In the first phase where all the type (a) communications happen, let $x$
be the number of sites that send bit $0$ to the coordinator, and $y$ be the
number of sites that send bit $1$ to the coordinator. If $\E[x+y] =
\Omega(k)$, then we are done. So let us assume that $\E[x+y] = o(k)$. By
Markov inequality we have that, with probability at least $0.9$, $x+y =
o(k)$. After the first phase, the problem becomes that there are $s' = s -
y = s - o(k)$ sites having bit $1$, out of a total $k' = k - x - y = k -
o(k)$ sites. The coordinator needs to figure out the exact value of $s'$
with probability at least $0.8- (1 - 0.9) = 0.7$.

In the second phase where all type (b) communication happens, from the
coordinator's perspective, all the remaining sites are still symmetric (by
the random input we choose), therefore the best it can do is to probe an
arbitrary site among those that it has not communicated with. This is still
true even after the coordinator has probed some of the remaining sites.
Therefore, the problem boils down to the following: The coordinator picks
$z$ sites out of the remaining $k'$ sites to communicate and then decides
the value of $s'$ with success probability at least $0.7$.  We call this
problem the {\em sampling} problem.  We can show that to achieve the
success guarantee, $z$ should be at least $\Omega(k)$.  This result is
perhaps folklore; proofs to more general versions of this problem can be
found in \cite{Bar-Yossef:02} (Chapter 4), and also \cite{patt:08,woodruff:07}.
%although we have not found a source where it is explicitly described. So 
We include a simpler proof in the appendix for completeness. With this we
conclude the proof of Lemma~\ref{lem:1bit}.
\end{proof}
\qinomit{
Here we describe a simple proof by reducing the
problem to Gap-Hamming, which is defined as following.
\begin{definition}[Gap-Hamming]
We have two players Alice and Bob. Alice has an input $x \in \{0,1\}^k$ and
Bob has an input $y \in \{0,1\}^k$, with the promise that $\abs{\Delta(x,y)
  - d} \ge \sqrt{k}$, where $\Omega(k) \le d \le k - \Omega(k)$ and
$\Delta$ denotes Hamming distance. The goal is for Alice and Bob to decide
whether $\Delta(x,y) \ge d + \sqrt{k}$ or $\Delta(x,y) \le d - \sqrt{k}$
with probability at least $2/3$.
\end{definition}
We denote this problem by $\mathrm{GHD}_{k,d,\sqrt{k}}(x,y)$. The following
theorem is proved by Chakrabarti and Regev~\cite{CR:11}.
\begin{theorem}\cite{CR:11}
For any $d \in [\Omega(k), k - \Omega(k)]$, there exists an distribution
$\mu$ on $\{0,1\}^k \times \{0,1\}^k$ such that if $(x,y)$ is chosen from
$\{0,1\}^k \times \{0,1\}^k$ according to the distribution $\mu$, then it
always holds that $\abs{\Delta(x,y) - d} \ge \sqrt{k}$ and any
deterministic algorithm that computes $\mathrm{GHD}_{k,d,\sqrt{k}}(x,y)$
correctly with probability $2/3$ has expected communication cost
$\Omega(k)$.
\end{theorem}
To reduce the sampling problem to Gap-Hamming, we have to modify the
definition of the $1$-bit problem a bit~\footnote{This is just for
  reduction purpose, the original definition is enough if we analyze the
  final sampling problem directly.}. Instead of choosing $s$ uniformly at
random from $\{k/2 + \sqrt{k}, k/2 - \sqrt{k}\}$, we choose $s$ according
to a distribution $\mu'$ defined as follows: we choose a pair $(x,y)$ from
$\{0,1\}^k \times \{0,1\}^k$ randomly according to distribution $\mu$, and
then set $s = \Delta(x,y)$. Obviously, $s \ge k/2 + \sqrt{k}$ or $s \le k/2
- \sqrt{k}$
}

% Local Variables:
% mode: LaTeX
% TeX-master: "paper.tex"
% TeX-header-end: "% End-Of-Header$"
% TeX-trailer-start: "% Start-Of-Trailer$"
% End:

\section{Tracking Distributed Frequencies}
\label{sec:track-distr-freq}
In the frequency-tracking problem, $A$ (we omit ``$(t)$'' when the context
is clear) is a multiset and the goal is to track the frequency of any item
$j$ within error $\eps n$. Let $f_{ij}$ denote the local frequency of
element $j$ in $A_i$, and let $f_j = \sum_{i=1}^k f_{ij}$.

\subsection{The algorithm}
\paragraph{The algorithm with a fixed $p$}
As in Section~\ref{sec:track-distr-sum} we first describe the algorithm
with a fixed parameter $p$.  If each site tracks the local frequencies
$f_{ij}$ exactly, we can essentially use the count-tracking algorithm to
track the $f_j$'s.  To achieve small space, we make use of the following
algorithm due to Manku and Motwani \cite{manku02:_approx} at each site
$S_i$: We maintain a list $L_i$ of counters. When an element $j$ arrives
at $S_i$, it first checks if there is a counter $c_{ij}$ for $j$ in $L_i$.
If yes, we increase $c_{ij}$ by $1$.  Otherwise, we sample this element
with probability $p$.  If it is sampled, we insert a counter $c_{ij}$,
initialized to $1$, into $L_i$.  It is easy to see that the expected size
of $L_i$ is $O(pn_i)$.

Next, we follow a similar strategy as in the count-tracking algorithm: The
site reports the counter $c_{ij}$ to the coordinator when it is first added
to the counter list with an initial value of $1$.  Afterward, for every $j$
that is arriving, the site always increments $c_{ij}$ as before, but only
sends the updated counter to the coordinator with probability $p$.  We
use $\bar{c}_{ij}$ to denote the last updated value of $c_{ij}$.

The tricky part is how the coordinator estimates $f_{ij}$, hence $f_j$.
Fix any time instance.  The difference between $f_{ij}$ and $\hat{c}_{ij}$
comes from two sources: one is the number of $j$'s missed before a copy is
sampled, and the other is the number of $j$'s that arrive after the last
update of $c_{ij}$.  It is easy to see that both errors follow the same
distribution as $n_i - \bar{n}_i$ in the count-tracking algorithm.  Thus it
is tempting to modify \eqref{eq:1} as
\begin{equation}
\label{eq:4}
\hat{f}_{ij} = \left\{
\begin{array}{ll}
\bar{c}_{ij} -2 + 2/p, & \textrm{if $\bar{c}_{ij}$ exists;}\\
0, & \textrm{else.}
\end{array}\right.
\end{equation}
However, this estimator is biased and its bias might be as large as
$\Theta(\eps n /\sqrt{k})$.  Summing over $k$ streams, this would exceed
our error guarantee.  To see this, consider the $f_{ij}$ copies of $j$.
Effectively, the site samples every copy with probability $p$, while
$\bar{c}_{ij}-2$ is exactly the number of copies between the first and the
last sampled copy (excluding both).  We define $X_1$ as before
\begin{displaymath}
X_1 = \left\{
\begin{array}{ll}
t_1, & \textrm{if the $t_1$th copy is the first one sampled;}\\
f_{ij}+1/p, & \textrm{if none is sampled.}
\end{array}\right.
\end{displaymath}
We define $X_2$ in exactly the same way, except that we examine these
$f_{ij}$ copies backward:
\begin{displaymath}
X_2 = \left\{
\begin{array}{ll}
t_2, & \textrm{if the $t_2$th copy is the first one sampled} \\ 
& \textrm{in the reverse order;}\\ 
f_{ij}+1/p, & \textrm{if none is sampled.}
\end{array}\right.
\end{displaymath}

It is clear that $X_1$ and $X_2$ have the same distribution with $\E[X_1] =
\E[X_2] = 1/p$ (by Lemma~\ref{sample}), so $\hat{f}_{ij} =
f_{ij}-(X_1+X_2)+2/p$ is unbiased.  Since $\bar{c}_{ij}-2 = f_{ij} -
t_1-t_2$, the correct unbiased estimator should be
\begin{equation}
\label{eq:2}
\hat{f}_{ij} = \left\{
\begin{array}{ll}
\bar{c}_{ij} - 2 + 2/p, & \textrm{if $\bar{c}_{ij}$ exists;}\\
-f_{ij}, & \textrm{else.}
\end{array}\right.
\end{equation}

Compared with the previous wrong estimator \eqref{eq:4}, the main
difference is how the estimation is done when no copy of $j$ is sampled.
When $f_{ij} = \Theta(\eps n /\sqrt{k})$ and $p = \Theta(1/f_{ij})$, this
happens with constant probability, which would result in a bias of
$\Theta(f_{ij}) = \Theta(\eps n /\sqrt{k})$.

However, the correct estimator \eqref{eq:2} depends on $f_{ij}$, the
quantity we want to estimate in the first place.  The workaround is to use
another unbiased estimator for $f_{ij}$ when $\bar{c}_{ij}$ is not yet
available.  It turns out that we can just use simple random sampling: The
site samples every element with probability $p$ (this is independent of the
sampling process that maintains the list $L_i$), and sends the sampled
elements to the coordinator.  Let $d_{ij}$ be the number of sampled copies
of $j$ received by the coordinator from site $i$, the final estimator for
$f_{ij}$ is
\begin{equation}
\label{eq:3}
\hat{f}'_{ij} = \left\{
\begin{array}{ll}
  \bar{c}_{ij} - 2 + 2/p, & \textrm{if $\bar{c}_{ij}$ exists;}\\
-d_{ij}/p, & \textrm{else.}
\end{array}\right.
\end{equation}
Since $d_{ij}$ is independent of $\bar{c}_{ij}$, the estimator is still
unbiased.  Below we analyze its variance.

\paragraph{Analysis}
Intuitively, the variance is not affected by using the simple random
sampling estimator $d_{ij}/p$, because it is only used when $\bar{c}_{ij}$
is not available, which means that $f_{ij}$ is likely to be small, and when
$f_{ij}$ is small, $d_{ij}/p$ actually has a small variance.  When $f_{ij}$
is large, $d_{ij}/p$ has a large variance, but we will use it only with small
probability. Below we give a formal proof.

\begin{lemma}
$\E[\hat{f}'_{ij}] = f_{ij}$; $\var[\hat{f}'_{ij}] = O(1/p^2)$.
\end{lemma}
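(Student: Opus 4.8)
The plan is to prove both claims by conditioning on the event $E$ that $\bar c_{ij}$ exists, i.e.\ that at least one copy of $j$ is sampled into the list $L_i$; then $\Pr[E^c] = (1-p)^{f_{ij}}$, and $E$ depends only on the list‑maintenance sampling at $S_i$, hence is independent of the separate $p$‑sampling that produces $d_{ij}$. Following the discussion above, I would introduce the auxiliary ``ideal'' estimator $\hat f_{ij} := f_{ij} - X_1 - X_2 + 2/p$, where $X_1,X_2$ are the forward and backward variables already defined; by Lemma~\ref{sample} (with $n_i$ replaced by $f_{ij}$) each of $X_1,X_2$ has mean $1/p$ and variance at most $1/p^2$. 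The structural facts I would record first are: on $E$ one has $\bar c_{ij}-2 = f_{ij}-X_1-X_2$, so $\hat f_{ij} = \bar c_{ij}-2+2/p$; on $E^c$ one has $X_1=X_2=f_{ij}+1/p$, so $\hat f_{ij} = -f_{ij}$; and consequently $\hat f'_{ij} = \hat f_{ij} + \mathbf 1_{E^c}\,(f_{ij} - d_{ij}/p)$.

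For unbiasedness, linearity of expectation and Lemma~\ref{sample} give $\E[\hat f_{ij}] = f_{ij}$ (no independence between $X_1$ and $X_2$ is needed here), while the independence of $d_{ij}$ from $E$ together with $\E[d_{ij}] = p f_{ij}$ gives $\E[\mathbf 1_{E^c}(f_{ij}-d_{ij}/p)] = \Pr[E^c]\,(f_{ij}-f_{ij}) = 0$; adding the two yields $\E[\hat f'_{ij}] = f_{ij}$.

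For the variance I would compute $\E[(\hat f'_{ij})^2] = \E[\hat f_{ij}^2\,\mathbf 1_E] + \E[(d_{ij}/p)^2\,\mathbf 1_{E^c}]$ and subtract $f_{ij}^2$. Since $\hat f_{ij}=-f_{ij}$ on $E^c$, $\E[\hat f_{ij}^2\,\mathbf 1_E] = \E[\hat f_{ij}^2] - f_{ij}^2\Pr[E^c]$, where $\E[\hat f_{ij}^2] = f_{ij}^2 + \var[\hat f_{ij}]$ and $\var[\hat f_{ij}] = \var[X_1+X_2] \le (\sqrt{\var X_1}+\sqrt{\var X_2})^2 \le 4/p^2$ by the triangle inequality in $L^2$ — this is how the covariance between $X_1$ and $X_2$ is absorbed. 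For the other term, independence and $d_{ij}\sim\mathrm{Bin}(f_{ij},p)$ give $\E[(d_{ij}/p)^2\,\mathbf 1_{E^c}] = \Pr[E^c]\big(f_{ij}(1-p)/p + f_{ij}^2\big)$. Adding the two, the $f_{ij}^2\Pr[E]$ and $f_{ij}^2\Pr[E^c]$ pieces recombine into exactly $f_{ij}^2$, which cancels against the subtracted $f_{ij}^2$, leaving $\var[\hat f'_{ij}] \le 4/p^2 + \Pr[E^c]\cdot f_{ij}/p$.

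The final step is to see that this leftover is $O(1/p^2)$: write it as $\tfrac1{p^2}\cdot(pf_{ij})(1-p)^{f_{ij}} \le \tfrac1{p^2}(pf_{ij})e^{-pf_{ij}} \le \tfrac1{ep^2}$, using $xe^{-x}\le 1/e$ for $x\ge 0$ (this is the quantitative form of the intuition that $-d_{ij}/p$ is used only when $f_{ij}$ is likely small). Hence $\var[\hat f'_{ij}] \le 4/p^2 + 1/(ep^2) = O(1/p^2)$. The step I would be most careful about is justifying the identity $\bar c_{ij}-2 = f_{ij}-X_1-X_2$ on $E$ and the distributional claim for $X_1,X_2$ in the degenerate regimes (very small $f_{ij}$, or no update ever forwarded after the counter is created), where one checks directly that the ``create'' event simultaneously plays the role of the first forward sample and, when there are no later forwarded updates, the first backward sample; everything else reduces to Lemma~\ref{sample} and elementary inequalities.
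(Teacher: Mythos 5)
Your proof is correct, and it follows the paper's overall structure (condition on the event $E$ that $\bar c_{ij}$ exists, analyze the auxiliary estimator $\hat f_{ij}$, add in the variance contributed by $d_{ij}/p$ on $E^c$). The genuine difference is in how you handle $\var[X_1+X_2]$: the paper expands $\var[X_1+X_2]\le 2\E[X_1X_2]$ and then grinds through an explicit computation of $\E[X_1X_2]$ (a double sum over geometric tails, plus a case split on whether $pf_{ij}\le 2$), whereas you simply invoke the $L^2$ triangle inequality $\sqrt{\var[X_1+X_2]}\le\sqrt{\var X_1}+\sqrt{\var X_2}\le 2/p$, which holds for any jointly distributed $X_1,X_2$ and immediately gives $\var[X_1+X_2]\le 4/p^2$. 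This is cleaner and entirely sidesteps the covariance computation; the price is a slightly larger constant ($4/p^2$ versus the paper's tighter bound), which is irrelevant for the $O(1/p^2)$ claim. Your closing bound $(pf_{ij})(1-p)^{f_{ij}}\le(pf_{ij})e^{-pf_{ij}}\le 1/e$ is likewise a tidier replacement for the paper's case split on $c=pf_{ij}$. The algebraic identity $\hat f'_{ij}=\hat f_{ij}+\mathbf 1_{E^c}(f_{ij}-d_{ij}/p)$ and the careful note about degenerate regimes are correct but not strictly needed once the conditional expectations are set up; overall the argument is sound and somewhat more streamlined than the original.
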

\begin{proof}
  We first analyze the estimator $\hat{f}_{ij}$ of \eqref{eq:2}.  That
  $\E[\hat{f}_{ij}] = f_{ij}$ follows from the discussion above.  Its
  variance is $\var[\hat{f}_{ij}] = \var[X_1+X_2]$.  Note that $X_1$ and
  $X_2$ are not independent, but they both have expectation $1/p$ and
  variance $\le 1/p^2$.  We first rewrite
\begin{eqnarray*}
\var[X_1+X_2]&=&\E[X_1^2+X_2^2+2X_1X_2]-\E[X_1+X_2]^2 \\
  &=& \var[X_1] + \E[X_1]^2 + \var[X_2] + \E[X_2]^2 \\
&&+ 2\E[X_1X_2] -
  (\E[X_1] + \E[X_2])^2 \\
&\le& 4/p^2 + 2\E[X_1X_2] - 4/p^2 \le 2\E[X_1X_2].
\end{eqnarray*}
Let $\mathcal{E}_t$ be the event that the $t$th copy of $j$ is the first
being sampled.  We have
\begin{eqnarray*}
&&\E[X_1X_2]\\
 &=&\sum_{t=1}^{f_{ij}}(1-p)^{t-1} pt\E[X_2 \mid \mathcal{E}_t] +
       (1-p)^{f_{ij}}(f_{ij} + 1/p)^2\\
       &=& \sum_{t=1}^{f_{ij}}(1-p)^{t-1}
      pt\left((1-p)^{f_{ij}-t}(f_{ij}-t+1)+\sum_{l=1}^{f_{ij}-t}(1-p)^{l-1}pl\right) \\
&&+ (1-p)^{f_{ij}}(f_{ij} + 1/p)^2\\
   &\le&     \frac{1}{p^2}+(1-p)^{f_{ij}}f_{ij}^2+\frac{(1-p)^{f_{ij}}f_{ij}}{p}.
\end{eqnarray*}
Let $c=f_{ij}p$.  If $c\le 2$, $f_{ij} \le 2/p$, and the variance is
$O(1/p^2)$. Otherwise
$$\E[X_1X_2]\le \frac{1}{p^2}+\frac{c^2}{p^2e^c} + \frac{c}{p^2 e^c}=
O(1/p^2),$$
since $c^2\le e^c$ when $c>2$.

Next we analyze the final estimator $\hat{f}'_{ij}$ of \eqref{eq:3}.
First, $d_{ij}$ is the sum of $f_{ij}$ Bernoulli random variables with
probability $p$, so $\E[d_{ij}/p]= f_{ij}$ and $\var[d_{ij}/p] \le
f_{ij}p/p^2 = f_{ij}/p$. Let $\mathcal{E}_*$ be the event that
$\hat{c}_{ij}$ is available, i.e., at least one copy of $j$ is sampled, and
$\mathcal{E}_0 = \overline{\mathcal{E}_*}$, then
\begin{eqnarray*}
\E[\hat{f}'_{ij}] &=& \E[\hat{f}_{ij} \mid
                 \mathcal{E}_*]\Pr[\mathcal{E}_*]+\E[-d_{ij}/p \mid
                 \mathcal{E}_0]\Pr[\mathcal{E}_0]\\
                 &=& \E[\hat{f}_{ij} \mid \mathcal{E}_*]\Pr[\mathcal{E}_*]+
                 (-f_{ij})\Pr[\mathcal{E}_0]\\
                 &=& \E[\hat{f}_{ij}] = f_{ij}.
\end{eqnarray*}
The variance is
\begin{eqnarray*}
\var[\hat{f}'_{ij}] &=& \E[\hat{f}'^2_{ij}] - \E[\hat{f}'_{ij}]^2\\
&=& \E[\hat{f}_{ij}^2 \mid
\mathcal{E}_*]\Pr[\mathcal{E}_*]+\E[(d_{ij}/p)^2 \mid
\mathcal{E}_0]\Pr[\mathcal{E}_0] - f_{ij}^2\\
                 &=& \E[\hat{f}_{ij}^2 \mid
                 \mathcal{E}_*]\Pr[\mathcal{E}_*]- f_{ij}^2+
                 \E[(d_{ij}/p)^2]\Pr[\mathcal{E}_0]\\
                 &=&\E[\hat{f}_{ij}^2 \mid
                 \mathcal{E}_*]\Pr[\mathcal{E}_*]- f_{ij}^2+
                 (\var[d_{ij}/p]+ f_{ij}^2)\Pr[\mathcal{E}_0]
\end{eqnarray*}
Note that 
\begin{eqnarray*}
\var[\hat{f}_{ij}]& = &\E[\hat{f}_{ij}^2] -f_{ij}^2\\
& =& \E[\hat{f}_{ij}^2 \mid \mathcal{E}_*] \Pr[\mathcal{E}_*] +
\E[\hat{f}_{ij}^2 \mid \mathcal{E}_0] \Pr[\mathcal{E}_0] - f_{ij}^2\\
&=& \E[\hat{f}_{ij}^2 \mid \mathcal{E}_*] \Pr[\mathcal{E}_*] +
f_{ij}^2 \Pr[\mathcal{E}_0] - f_{ij}^2,
\end{eqnarray*}
so
\begin{eqnarray*}
\var[\hat{f}'_{ij}]  &=& \var[\hat{f}_{ij}]+ \var[d_{ij}/p] \Pr[\mathcal{E}_0]\\
               &  \le &\var[\hat{f}_{ij}]+
                 \frac{f_{ij}}{p}\cdot(1-p)^{f_{ij}}.
\end{eqnarray*}
Due to the same reason as above, the second term is $O(1/p^2)$, and the
proof completes.
\end{proof}

\paragraph{Dealing with a decreasing $p$}
As in the count-tracking algorithm, we divide the whole tracking period
into $O(\log N)$ rounds.  Within each round, $n$ stays within a constant
factor of $\bar{n}$, while $\bar{n}$ remains fixed for the whole round.

Within a round, we set the parameter $p$ for all sites to be $p=1/\lfloor
\eps \bar{n} / \sqrt{k} \rfloor_2$.  When we proceed to a new round, all
sites clear their memory and we start a new copy of the algorithm from
scratch with the new $p$.  Given an item $j$, the coordinator estimates its
frequency from each round separately, and add them up.  Since the variance
in a round is $O(k/p^2)$ and $p$ increases geometrically over the rounds,
the total variance is asymptotically bounded by the variance of the last
round, i.e., $O(1/\eps^2)$, as desired.

The space used at some site could still be large, since the site may
receive too many elements in a round.  If all the $O(n)$ elements in a
round have gone to the same site, the site will need to use space $O(pn) =
O(\sqrt{k}/\eps)$.  To bound the space, we restrict the amount of space
used by each site.  More precisely, when a site receives more than
$\bar{n}/k$ elements, it sends a message to the coordinator for
notification, clears its memory, and starts a new copy of the algorithm
from scratch. The coordinator will treat the new copy as if it were a new
site, while the original site no longer receives more elements.  Now the
space used at each site is at most $p\bar{n}/k = O(1/(\eps \sqrt{k}))$.  Since there
are at most $O(k)$ such new ``virtual'' sites ever created in a round, this
does not affect the variance by more than a constant factor.

It remains to show that the total communication cost is $O(\sqrt{k}/\eps
\cdot \log N)$.  From earlier we know that there are $O(\log N)$ rounds;
within each round, $\bar{n}$ is the same and $n$ stays within
$\Theta(\bar{n})$.  Focus on one round.  For each arriving element, the
site $S_i$ updates $\bar{c}_{ij}$ with probability $p$ and also
independently samples it with probability $p$ to maintain $d_{ij}$.  This
costs $O(n \cdot p) = O(\sqrt{k}/\eps)$ communication.

\begin{theorem}
\label{thm:hh}
  There is an algorithm for the frequency-tracking problem that, at any
  time, estimates the frequency of any element within error $\eps n$ with
  probability at least $0.9$.  It uses $O(1/(\eps \sqrt{k})$ space at each site and
  $O(\sqrt{k}/\eps \cdot \log N)$ communication.
\end{theorem}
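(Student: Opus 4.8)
\proof
The proof is essentially an assembly of the components developed above, so the plan is to state the final algorithm, then chain together the single-round variance bound, the round decomposition, and the space/communication accounting. Within a round (where the constant-factor tracker keeps $\bar n$ fixed and $n=\Theta(\bar n)$), every site $S_i$ runs the Manku--Motwani list $L_i$ with sampling probability $p=1/\lfloor\eps\bar n/\sqrt k\rfloor_2$, reports each counter $c_{ij}$ when it is first created with value $1$, resends its updated value with probability $p$ on each later arrival of $j$, and in parallel keeps an independent Bernoulli-$p$ sample of its stream feeding the counts $d_{ij}$; the coordinator estimates $f_{ij}$ by $\hat f'_{ij}$ of \eqref{eq:3} and outputs $\hat f_j=\sum_{i=1}^k\hat f'_{ij}$. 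By the preceding lemma each $\hat f'_{ij}$ is unbiased with variance $O(1/p^2)$, and since the per-site estimators use independent random sources they are independent, so $\var[\hat f_j]=O(k/p^2)=O((\eps\bar n)^2)=O((\eps n)^2)$, and Chebyshev gives $\abs{\hat f_j-f_j}\le\eps n$ with probability at least $0.9$ after rescaling $\eps$ and $p$ by constants.

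To bound the per-site space I would cap each site at $\bar n/k$ arrivals per round: once $S_i$ exceeds this it sends one notification to the coordinator, clears $L_i$ and its sample, and a fresh ``virtual site'' takes over all subsequent arrivals there. A round has $O(\bar n)$ arrivals, so at most $O(k)$ virtual sites are ever created, and the coordinator treats each as an ordinary site; summing over $O(k)$ (virtual) sites rather than $k$ changes the variance only by a constant factor, while each (virtual) site now stores a list and a sample of expected size $O(p\cdot\bar n/k)=O(1/(\eps\sqrt k))$.

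For the full time horizon, the constant-factor tracker for $n$ (communication $O(k\log N)$) splits the stream into $O(\log N)$ rounds; at the start of each round all sites clear their memory and restart with the new $p$, and the coordinator sums the per-round frequency estimates for item $j$. Since the per-round variance is $O(k/p^2)$ and $p$ grows geometrically over the rounds, the total variance is dominated by that of the last round and remains $O((\eps n)^2)$, so the error guarantee still holds. For communication, within each round maintaining all the $\bar c_{ij}$'s costs $O(np)=O(\sqrt k/\eps)$, the independent sample costs another $O(np)=O(\sqrt k/\eps)$, and the virtual-site notifications together with the $n$-tracking updates add $O(k)=O(\sqrt k/\eps)$ using the assumption $k\le 1/\eps^2$; over $O(\log N)$ rounds this totals $O(\sqrt k/\eps\cdot\log N)$.

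The delicate point, already dispatched in the discussion above, is the estimator design: the naive choice \eqref{eq:4} is biased by as much as $\Theta(\eps n/\sqrt k)$ in the event that no copy of $j$ is sampled, so one must instead use the auxiliary estimate $-d_{ij}/p$ from an independent sample in that event, and the variance proof must condition on whether $\bar c_{ij}$ exists to argue that the (occasionally large) variance $f_{ij}/p$ of $d_{ij}/p$ is incurred only with the small probability $(1-p)^{f_{ij}}$, keeping the total variance at $O(1/p^2)$. For the theorem itself, the only remaining obstacle is verifying that the virtual-site trick caps the space without inflating the variance (it does not, since only $O(k)$ virtual sites appear) or the communication (where $k\le1/\eps^2$ is exactly what is needed to absorb the extra $O(k)$ terms).
\endproof
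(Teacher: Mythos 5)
Your proof is correct and follows essentially the same route as the paper: the Manku--Motwani counters with probability-$p$ updates plus an independent Bernoulli-$p$ sample, the two-case estimator $\hat f'_{ij}$ of \eqref{eq:3} with its $O(1/p^2)$ variance lemma, the virtual-site trick to cap per-site space at $O(1/(\eps\sqrt k))$, and the $O(\log N)$-round restart with per-round communication $O(\sqrt k/\eps)$. One small slip (which you inherited from a loose phrase in the paper's own exposition): since $\bar n$ grows, $p=1/\lfloor\eps\bar n/\sqrt k\rfloor_2$ \emph{shrinks} geometrically over rounds, so the per-round variance $O(k/p^2)$ \emph{grows} geometrically and the last round dominates --- your conclusion is right but the stated direction of $p$ is reversed.
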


\subsection{Space lower bound}
\label{sec:lower-bound-1}
It is easy to see that the communication lower bounds for the
count-tracking problem also hold for the frequency-tracking problem. In
this section, we prove the following space-communication trade-off. 

\begin{theorem}
\label{thm:hhlb}
Consider any randomized algorithm for the frequency-tracking problem that, at any
time, estimates the frequency of any element within error $\eps n$ with
probability at least $0.9$.  If the algorithm uses $C$ bits of
communication and uses $M$ bits of space per site, then we must have
$C\cdot M=\Omega(\log N / \eps^2)$, assuming $k\le 1/\eps^2$.
\end{theorem}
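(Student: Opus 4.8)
The plan is to prove the space-communication trade-off $C\cdot M = \Omega(\log N/\eps^2)$ via a direct information-theoretic / counting argument on a suitably constructed hard distribution. The high-level idea mirrors the communication lower bounds already in the paper: we build the input in $\ell = \Theta(\log N)$ geometrically growing rounds, and within each round we hide $\Theta(1/\eps^2)$ bits of information that the coordinator must recover. The twist is that we route all the elements of a round into a \emph{single} site, so that whatever the coordinator learns about that round must flow through (i) that one site's memory, reset at round boundaries, or (ii) the communication channel. This is where the product $C\cdot M$ enters: in a round where the adversary sends everything to one site, the only way that site can convey more than $M$ bits about the round's contents across a round boundary is to communicate; so if a round carries $\Omega(1/\eps^2)$ bits of ``must-know'' information, either that round already cost $\Omega(1/\eps^2 - M)$ communication, or subsequent rounds did.

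Concretely, I would set up the distribution as follows. Fix one designated site $S_1$ (by symmetry / averaging, or by putting it in the distribution). In round $i$, for $i = 1,\dots,\ell$ with $\ell = \Theta(\log N)$, let the current total be $n \approx 2^i \cdot (1/\eps^2)$ roughly, and feed a block of $\Theta(1/\eps)$ distinct ``fresh'' item types into $S_1$, each with multiplicity either slightly above or slightly below $\eps n/2$ — a free bit per item type, $\Theta(1/\eps)$ bits encoded per... actually to get $1/\eps^2$ I want $\Theta(1/\eps)$ items each carrying $\Theta(1/\eps)$... no: simpler is to encode $\Theta(1/\eps^2)$ independent bits by choosing a uniformly random subset of size $\Theta(1/\eps^2)$ from a universe of $\Theta(1/\eps^2)$ fresh items and giving each chosen item a single copy while unchosen items get none — wait, frequencies of $1$ are far below $\eps n$. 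The right construction: take $1/\eps$ ``groups'', each group consisting of $1/\eps$ fresh item types; within each group, exactly one random item type is ``heavy'' (gets $\eps n$ copies) and the rest get none; recovering which one is heavy per group requires distinguishing frequency $\eps n$ from $0$, which the $\eps n$-approximate tracker can do; that is $\log(1/\eps)$ bits per group, $\Theta(1/\eps \cdot \log(1/\eps))$ bits per round. To squeeze out the full $1/\eps^2$, instead use $1/\eps$ groups of $1/\eps$ items where a random \emph{half} of each group is heavy — $\Theta(1/\eps)$ bits per group, $\Theta(1/\eps^2)$ bits per round. I would make this precise using a Fano-type / direct-sum argument: the coordinator, queried on each of the $\Theta(1/\eps^2)$ item types at the end of round $i$, outputs with prob.\ $0.9$ a value within $\eps n$, hence correctly classifies heavy vs.\ absent; a union bound over the queries combined with the fact that frequencies are bounded means the coordinator's view at end of round $i$ has mutual information $\Omega(1/\eps^2)$ with the round-$i$ random bits.

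The core of the argument is then a ``memory + communication'' bottleneck per round. Let $\Pi_i$ be the transcript restricted to messages involving $S_1$ during round $i$ (total length $\le C$ over all rounds) and let $M_i$ be $S_1$'s memory state at the end of round $i$ (size $\le M$ bits). Because in round $i$ only $S_1$ receives elements, and $S_1$'s future behavior depends on round $i$'s input only through $(M_i, \text{future messages})$, while the coordinator's knowledge of round $i$ at \emph{end of round $i$} comes only through $\Pi_i$ together with — crucially nothing else, since no other site saw these items yet. So $\Pi_i$ alone must carry $\Omega(1/\eps^2)$ bits, OR the coordinator defers: it queries later, in which case $S_1$ must have retained the info in $M_i$ (at most $M$ bits) and subsequently transmitted it. Formally: $I(\text{bits of round }i ; \text{coordinator's final view}) = \Omega(1/\eps^2)$, and this information passes through the cut consisting of (messages to/from $S_1$ during and after round $i$) $\cup$ ($M_i$); a standard cut / data-processing bound gives (total communication touching $S_1$ after round $i$ starts) $+ M \ge \Omega(1/\eps^2)$, hence summing appropriately over the $\ell$ rounds and being careful not to double-count communication, $\ell \cdot \min(C, \Omega(1/\eps^2))$-type bookkeeping yields $C + \ell M = \Omega(\ell/\eps^2)$ or more cleanly $C\cdot M = \Omega(\ell/\eps^2) = \Omega(\log N/\eps^2)$ after handling the case $C \ge \log N/\eps^2$ (trivially done) separately from $C < \log N/\eps^2$ (where in $\Omega(\log N)$ of the rounds the communication is $o(1/\eps^2)$, forcing $M = \Omega(1/\eps^2 \cdot (\log N)/C)$ on average in those rounds... giving $CM = \Omega(\log N/\eps^2)$). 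The assumption $k \le 1/\eps^2$ is used to ensure $1/\eps^2 \ge k$ so the per-round information $1/\eps^2$ is the binding constraint and the universe of fresh items stays reasonable.

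The main obstacle I anticipate is making the per-round cut argument rigorous in the \emph{online tracking} setting, where the coordinator may answer queries \emph{during} a round and where communication is interactive and not cleanly partitioned — in particular, ruling out that $S_1$ ``pre-communicates'' round-$i$ information cheaply by sending it out incrementally (this is exactly why we need the information to be concentrated in a burst within the round rather than spread out, and why we exploit that the items are \emph{fresh} so no one else has a head start). A second, more technical hurdle is the bookkeeping that converts the family of per-round inequalities ``$(\text{comm. in round} \ge i) + M \ge \Omega(1/\eps^2)$'' into a clean product bound; the cleanest route is probably to argue that in at least $\Omega(\ell)$ rounds the incremental communication is at most $C/\ell \cdot O(1)$-ish, so in each such round $M$ must absorb $\Omega(1/\eps^2) - O(C/\ell)$ bits, and then observe that if $C \le c\ell/\eps^2$ for a small constant $c$ this forces $M = \Omega(1/\eps^2 \cdot \ell/\ell) = \Omega(1/\eps^2)$... which overshoots, so the right statement must instead amortize $M$ being \emph{reused} across rounds against the fact that reuse requires a fresh communication to ``reload'' — I would formalize this by charging, for each round $i$, either $\Omega(1/\eps^2)$ to that round's communication or $\Omega(1/\eps^2/M)$ rounds' worth of a ``reload message'', yielding $C \ge \ell \cdot \Omega(1/\eps^2)/\Theta(M \cdot \text{stuff})$, i.e., $C\cdot M = \Omega(\log N/\eps^2)$ as claimed.
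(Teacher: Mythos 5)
Your plan has two concrete flaws, and it also misses the structural trick the paper uses, so the proposal as written does not go through.

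First, the per-round information budget of $\Omega(1/\eps^2)$ bits is infeasible. In a round carrying $\Theta(n)$ elements, any item distinguishable from ``absent'' by a tracker with additive error $\eps n$ must have frequency at least $2\eps n$, so at most $O(1/\eps)$ items can be ``heavy,'' and the round's content is fully summarized by an $O(1/\eps)$-word Misra--Gries sketch, i.e.\ $O(\log n/\eps)$ bits of information, not $1/\eps^2$. Your own second construction exposes this: $1/\eps$ groups, half of each group heavy, gives $\Theta(1/\eps^2)$ heavy items each needing $\Theta(\eps n)$ copies, i.e.\ $\Theta(n/\eps)$ elements in a round that only has $\Theta(n)$. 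The genuine per-round hardness for the one-shot $m$-party problem is $\Theta(\sqrt{m}/\eps)$ (the \cite{verbin:_tight} bound), which reaches $1/\eps^2$ only when $m=1/\eps^2$; by routing every round into one site you are in the $m=1$ regime, where the hardness is only $1/\eps$.

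Second, the ``memory $+$ communication'' cut argument does not actually bring $M$ into the bound. Because the coordinator must already be ready to answer queries at the \emph{end of round $i$}, and round inputs are independent across rounds, all the mutual information $I(\text{round-}i\text{ input};\text{coordinator's view at end of round }i)$ must flow through messages sent \emph{during} round $i$. The site's memory would only enter if the coordinator could defer learning about round $i$ until some later round; the problem definition forbids that. So even granting the (wrong) per-round budget, your argument would yield a communication-only bound $C = \Omega(\sum_i I_i)$, never a product $C\cdot M$.

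What the paper actually does is a \emph{simulation/reduction}, not a direct information argument, and this is the missing idea. It considers the one-shot $\rho k$-party version of the problem (for a tunable parameter $\rho\ge 1$), generates $\log N$ independent hard instances in geometrically growing rounds, and shows that the real $k$-site tracking algorithm simulates a $\rho k$-party one-shot protocol by baton-passing: when the $j$-th virtual party finishes its elements, the real site ships its $M$-bit memory state to the next virtual party. This gives a one-shot protocol with communication $C + O(\rho k\, M\log N)$. Combined with the $\Omega(\sqrt{\rho k}/\eps\cdot\log N)$ lower bound of \cite{verbin:_tight} for $\log N$ independent one-shot copies, one gets $C + \rho k\, M\log N \ge c\sqrt{\rho k}/\eps\cdot\log N$, and then optimizing $\rho$ (set $\sqrt{\rho} \approx C\eps/(\sqrt{k}\log N)$) produces exactly $C\cdot M=\Omega(\log N/\eps^2)$. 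The space $M$ enters legitimately because each baton pass costs $M$ bits, and the freedom to blow up the number of virtual parties to $\rho k$ is what lets the per-round hardness scale to the point where the trade-off becomes tight. Your proposal, missing both the external $\sqrt{\rho k}/\eps$ hardness and the baton-pass simulation, would have to reinvent this machinery to succeed.
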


Thus, if the communication cost is $C = O(\sqrt{k}/\eps \cdot \log N)$
bits, the space required per site is at least $\Omega(1/(\eps \sqrt{k}))$
bits, as claimed in Table~\ref{tab:results}.  Note that, however, our
algorithm of the previous section uses $O(\sqrt{k}/\eps \cdot \log N)$ {\em
  words} of communication and $O(1/(\eps \sqrt{k}))$ {\em words} of space,
so there is still a small gap between the lower and upper bound.
Interestingly, this lower bound also shows that the random sampling
algorithm \cite{cormode10:_optim} (see Table~\ref{tab:results}) actually
attains the other end of this space-communication trade-off (ignoring the
word/bit difference).

\begin{proof}(of Theorem~\ref{thm:hhlb})
  We will use a result in \cite{verbin:_tight} which states that, under the
  $k$-party communication model, there is an input distribution $\mu_k$
  such that, any algorithm that solves the one-shot version of the problem
  under $\mu_k$ with error $2\eps n$ with probability $0.9$ needs at least $c
  \sqrt{k}/\eps$ bits of communication for some constant $c$, assuming
  $k\le 1/\eps^2$.  Moreover, any algorithm that solves $\ell$ independent
  copies of the one-shot version of the problem needs at least $\ell \cdot
  c \sqrt{k}/\eps$ bits of communication.
  
  We will consider the problem over $\rho k$ sites, for some integer $\rho
  \ge 1$ to be determined later.  We divide the whole tracking period into
  $\log N$ rounds. In each round $i=1,\dots,\log N$, we generate an input
  independently chosen from distribution $\mu_{\rho k}$ to the sites.  We
  pick elements from a different domain for every round so that we have
  $\log N$ independent instances of the problem.  In round $i$, for every
  element $e$ picked from $\mu_{\rho k}$ for any site, we replace it with
  $2^{i-1}$ copies of $e$.  We arrange the element arrivals in a round so
  that site $S_1$ gets all its elements first, then $S_2$ gets all its
  elements, and so on so forth. We will only require the continuous
  tracking algorithm to solve the frequency estimation problem at the end
  of each round.  Since the last round always contains half of all the
  elements that have arrived so far, the algorithm must solve the problem
  for the elements in each round, namely, $\log N$ independent instances of
  the one-shot problem.  By the result in \cite{verbin:_tight}, the
  communication cost to solve all these instances of the problem is at
  least $c \sqrt{\rho k}/\eps \cdot \log N$.
  
  Let $\mathcal{A}_k$ be a continuous tracking algorithm over $k$ sites
  that communicates $C$ bits in total and uses $M$ bits of space per site.
  Below we show how to solve the problem over the $\rho k$ sites in each
  round, by simulating the $k$-site algorithm $\mathcal{A}_k$.  In each
  round, we start the simulation with sites $S_1, \dots, S_k$.  Whenever
  $\mathcal{A}_k$ exchanges a message, we do the same.  When $S_1$ has
  received all its elements, it sends its memory content to $S_{k+1}$,
  which then takes the role of $S_1$ in the simulation and continue.
  Similarly, when $S_2$ has received all its elements, it sends its memory
  content to $S_{k+2}$, which replaces $S_2$ in the simulation.  In
  general, when $S_j$ is done with all its elements, it passes its role to
  $S_{j+k}$.  When $S_{\rho k}$ is done, the simulation finishes for this
  round.  $S_{\rho k}$ then sends a broadcast message and we proceed to the
  next round.
  
% We start simulating $\mathcal{A}$ with sites $S_1,\dots,S_k$.  Whenever $\mathcal{A}$ exchanges a message, we do the same.  When $S_j$ has received all its elements for some $1\le j \le  k$, we ask $S_j$ to send its memory content to $S_{j+k}$.  Then $S_{j+k}$  takes the role of $S_j$ in the simulation.  When $S_k$ has received all  its elements, $S_{k+1},\dots,S_{2k}$ have replaced $S_1,\dots,S_k$ in the  simulation, and we start over again.  The simulation stops when $S_{\rho  k}$ has received all its elements.
  
  Let us analyze the communication cost of the simulation.  First, we
  exchange exact the same messages as $\mathcal{A}_k$ does, which costs
  $C$.  We also communicate $\rho (k-1)$ memory snapshots and a broadcast
  message in each round, which costs $\le \rho k M \log N$ over all rounds.
  Thus, we have
\[ C + \rho k M \log N \ge c\sqrt{\rho k}/\eps \cdot \log N.
\]
Rearranging,
\[
M \ge \frac{c}{\eps\sqrt{\rho k}} - \frac{C}{\rho k \log N} =
\frac{1}{\sqrt{\rho k}}\left(\frac{c}{\eps} - \frac{C}{\sqrt{\rho k}\log N} \right)
\]

Thus, if we set $\sqrt{\rho} = \left\lceil \frac{2C\eps}{c\sqrt{k}\log N}
\right\rceil$, then
\[ M \ge \frac{c}{2\eps\sqrt{\rho k}} = \Omega\left(\frac{\log N}{C\eps^2}\right),
\]
as claimed.
\end{proof}

%%% Local Variables: 
%%% mode: latex
%%% TeX-master: "paper"
%%% End: 

%% \noindent{\bf Remark:} The probabilistic guarantee above holds for any {\em
%%   one} time instance.  To make it hold for all time instances, since we can
%% use the same estimates until $n$ grows by a $1+\eps$ factor, it suffices to
%% only consider $O(\log_{1+\eps} N)= O(1/\eps\cdot \log N)$ time instances.
%% Thus we can run $O(\log(1/\eps) + \log\log N)$ independent instances of the
%% algorithm, and take the median of them as the final estimate.  Then
%% standard Chernoff and union bound arguments will show that, with constant
%% probability, any frequency can be estimated within error $\le \eps n$ at all times.

%\input{quantile}

\section{Tracking Distributed Ranks}
\label{sec:rank-estimation}
On a stream of $n$ elements, an algorithm that produces an unbiased
estimator for any rank with variance $O((\eps n)^2)$ was presented in
\cite{suri06:_range}, which has been very recently improved and made to
work in a stronger model \cite{agarwal:mergeable}.  It uses $O(1/\eps \cdot
\log^{1.5} (1/\eps))$ working space to maintain a rank estimation summary
structure of size $O(1/\eps)$.  We call this algorithm $\A$ and will use it
as a black box in our distributed tracking algorithm.

\paragraph{The overall algorithm}
As before, with $O(k \log N)$ communication, we first track $\bar{n}$, a
constant factor approximation of the current $n$.  This also divides the
tracking period into $O(\log N)$ rounds.  The $\Theta(n)$ elements arriving
in a round are divided into chunks of size at most $\bar{n}/k$, each processed by
an instance of algorithm $\C$, described below.  A site may receive more
than $\bar{n}/k$ elements.  When the $(\bar{n}/k+1)$th element arrives, the site
finishes the current instance of $\C$, and starts a new one, which will
process the next $\bar{n}/k$ elements, and so on so forth.

\paragraph{Algorithm $\C$}
Algorithm $\C$ reads at most $\bar{n}/k$ elements, and divides them into blocks
of size $b=\eps \bar{n}/\sqrt{k}$, so there are at most $\frac{1}{\epsilon
  \sqrt{k}}$ blocks. We build a balanced binary tree on the blocks in the
arrival order, and the height of the tree is $h \le \log \frac{1}{\epsilon
  \sqrt{k}}$. For each node $v$ in the tree, let $D(v)$ be all the elements
contained in the leaves in the subtree rooted at $v$.  For each $D(v)$, we
start an instance of $\A$, denoted as $\A_v$, to process its elements as they
arrive.  We say that $v$ is {\em active} if $\A_v$ is still accepting
elements.  For a node $v$ at level $\ell$ (the leaves are said to be on level
$0$), the error parameter of $\A_v$ is set to $2^{-\ell}/\sqrt{h}$. We say
$v$ is {\em full} if all the elements in $D(v)$ have arrived. When $v$ is
full, we send the summary computed by $\A_v$ to the coordinator, and free
the space used by $\A_v$. Furthermore, for each element that is arriving, we
sample it with probability $p=\frac{\sqrt{k}}{\eps \bar{n}}$, and if it is
sampled, we send it to the coordinator.

\paragraph{Analysis of costs}
We first analyze the various costs of $\C$.  At any time there are at most
$h$ active nodes, one at each level, so the space used by $\C$ is at most
$$\sum_{\ell=0}^{h} \sqrt{h}2^\ell \log^{1.5}\frac{1}{\eps} =
O\left(\frac{\sqrt{h}}{\eps \sqrt{k}} \log^{1.5}\frac{1}{\eps}\right).$$

The communication for $\C$ includes all the summaries computed, and the
elements sampled.   For each $\ell$, the total size of the summaries on level
$\ell$ is
$$O\left(\frac{1}{\eps \sqrt{k}} 2^{-\ell} \cdot
  2^{\ell}\sqrt{h}\right)=O\left(\frac{\sqrt{h}}{\eps \sqrt{k}}\right).$$
Summing over all $h$ levels, it is $\frac{h^{1.5}}{\eps \sqrt{k}}$. There are at
most $2k$ instances of $\C$ in a round, therefore the total communication
cost in a round is $O(h^{1.5}\sqrt{k}/\eps)$.  The number of sampled elements
in a round is $O(n p) = O(\sqrt{k}/\eps)$.  Thus, over all $O(\log N)$
  rounds, the total communication cost is $O(h^{1.5}\sqrt{k}/\eps \cdot
  \log N)$.

\paragraph{Estimation} It remains to show how the coordinator estimates
the rank of any given element $x$ at any time with variance $O((\eps
n)^2)$.  We decompose all $n$ elements that have arrived so far into
smaller subsets, and estimate the rank of $x$ in each of the subsets.
Since all estimators are unbiased, the overall estimator is also unbiased;
the variance will be the sum of all the variances.

We will focus on the current round; all previous rounds can be handled
similarly.  Recall that there are $O(\bar{n})$ elements arriving in this
round and $\bar{n} = \Theta(n)$.  Every chunk of $\bar{n}/k$ elements are
processed by one instance of $\C$.  Consider any such chunk.  Suppose up to
now, $n'$ elements in this chunk have arrived for some $n'\le \bar{n}/k$.
We write $n'$ as $n' = q\cdot b + r$ for some $r<b$, and decompose these
$n'$ elements into at most $h+1$ subsets. The first $qb$ elements are
decomposed into at most $h$ subsets, each of which corresponds to a full
node in the binary tree of $\C$.  The node has already sent its summary to
the coordinator, which we can use to estimate the rank.  For a node at
level $\ell$, the variance is $(2^{-i}/\sqrt{h}\cdot 2^{i} b)^2 = b^2/h$,
so the total variance from all $h$ nodes is $b^2$.

For the last $r$ elements of the chunk that are still being processed by an
active node, the coordinator does not have any summary for them.  But
recall that the site always samples each element with probability $p=\sqrt{k}/(\eps
\bar{n})$ and sends it to the coordinator if it is sampled.  Thus, the
rank of $x$ in these $r$ elements can be estimated by simply counting the
number $c$ of elements sampled that are smaller than $x$, and the estimator
is $c/p$.  The variance of this estimator is $r/p \le b/p = b^2$.  Thus,
the variance from any chunk is $O(b^2)$.  Since there are at most $2k$
chunks in the round, the total variance is $O(b^2 k) = O((\eps \bar{n})^2)
= O((\eps n)^2)$.  As the variances of the previous rounds are
geometrically decreasing, the total variance from all the rounds is still
bounded by $O((\eps n)^2)$, as desired.

\begin{theorem}
  There is an algorithm for the rank-tracking problem that, at any time,
  estimate the rank of any element within error $\eps n$ with probability
  at least $0.9$.  It uses $O\left(\frac{1}{\eps\sqrt{k}}
    \log^{1.5}\frac{1}{\eps} \log^{0.5}\frac{1}{\eps \sqrt{k}}\right)$
  space at each site with communication cost
  $O\left(\frac{\sqrt{k}}{\eps}\log N \log^{1.5} \frac{1}{\epsilon
      \sqrt{k}}\right)$.
\end{theorem}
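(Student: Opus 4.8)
\proof
The plan is to assemble the three analyses already carried out for algorithm $\C$ --- working space, communication, and estimation variance --- and then close with the standard boosting argument used in Theorem~\ref{thm:sum}. Fix a time $t$ (chosen independently of the algorithm's coins) and a query element $x$. The first task is to exhibit an unbiased estimator of the rank of $x$ in $A(t)$. I would decompose the $n=n(t)$ arrived elements first by round, then within the current round by the $O(k)$ chunks (one instance of $\C$ per $\bar n/k$ elements, and a round holds $\Theta(\bar n)$ elements), and finally within a chunk whose $n'=qb+r$ elements have arrived into the $\le h$ subtree-sets $D(v)$ of the full nodes $v$ that tile the first $qb$ elements, plus the remaining $r<b$ elements. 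Each full $v$ has already shipped the summary $\A_v$, which the black box of \cite{suri06:_range} guarantees to be an unbiased rank estimator; the $r$ leftover elements are covered by the $p$-sample that the coordinator holds at all times, via the unbiased estimate $c/p$, where $c$ counts sampled elements below $x$. Summing all these estimators over chunks and rounds gives an unbiased estimator of the rank of $x$ in $A(t)$.

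For the variance, these estimators are mutually independent --- disjoint element sets, independent random bits across the $\A$ instances, and an independent sampling process --- so their variances add. By the guarantee of $\A$, a full node at level $\ell$, run with error parameter $2^{-\ell}/\sqrt h$ on $2^\ell b$ elements, contributes variance $O\big((2^{-\ell}/\sqrt h\cdot 2^\ell b)^2\big)=O(b^2/h)$, so over the $\le h$ levels this telescopes to $O(b^2)$ per chunk; the leftover estimator has variance $r/p\le b/p=b^2$ since $1/p=\eps\bar n/\sqrt k=b$. Thus each chunk contributes $O(b^2)$, a round's $O(k)$ chunks contribute $O(kb^2)=O((\eps\bar n)^2)=O((\eps n)^2)$ because $b=\eps\bar n/\sqrt k$ and $\bar n=\Theta(n)$, and since $\bar n$ --- hence the per-round variance --- decreases geometrically backward over the $O(\log N)$ rounds, the total variance is $O((\eps n)^2)$. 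Chebyshev then yields error $\le\eps n$ with constant probability, and rescaling $\eps$ and $p$ by suitable constants pushes the failure probability below $0.1$, exactly as in Theorem~\ref{thm:sum}.

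The cost bounds are the two computations already displayed. At any moment $\C$ keeps one active $\A_v$ per level, so its working space is $\sum_{\ell=0}^h O\big(2^\ell\sqrt h\log^{1.5}\tfrac1\eps\big)=O\big(\tfrac{\sqrt h}{\eps\sqrt k}\log^{1.5}\tfrac1\eps\big)$, which with $h=\log\tfrac1{\eps\sqrt k}$ is the claimed per-site bound. For communication, the summaries shipped on level $\ell$ total $O\big(\tfrac1{\eps\sqrt k}2^{-\ell}\cdot 2^\ell\sqrt h\big)=O(\sqrt h/(\eps\sqrt k))$, giving $O(h^{1.5}/(\eps\sqrt k))$ per instance of $\C$ and $O(h^{1.5}\sqrt k/\eps)$ over a round's $O(k)$ instances; this dominates the $O(\sqrt k/\eps)$ sampled elements and the $O(k)$ spent maintaining $\bar n$ (using $\eps\sqrt k\le1$ and $h\ge1$), so each round costs $O(h^{1.5}\sqrt k/\eps)$ and all $O(\log N)$ rounds cost $O\big(\tfrac{\sqrt k}\eps\log N\log^{1.5}\tfrac1{\eps\sqrt k}\big)$.

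The step I expect to be the crux is not any single estimate but pinning down the per-chunk decomposition: a chunk carries an active $\A_v$ on every level of $\C$'s tree at once, and one must argue that the already-arrived content of the shallow active nodes is fully accounted for by the deeper full subtrees, leaving exactly the $r<b$ elements of the lowest active block uncovered and handed to the sampler --- so that the element sets are genuinely disjoint and the $O(b^2/h)$-per-level bound really does telescope. Once that decomposition and the role of the level-dependent schedule $2^{-\ell}/\sqrt h$ are in place, everything else is geometric-series bookkeeping plus a black-box invocation of $\A$.
\endproof
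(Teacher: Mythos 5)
Your proposal reproduces the paper's own argument: the same decomposition of the current round's elements into $O(k)$ chunks, each chunk into $\le h$ full dyadic subtrees plus a residual block of $r<b$ elements covered by the $p$-sample; the same level-dependent error schedule $2^{-\ell}/\sqrt h$ making each level contribute variance $b^2/h$ and each chunk $O(b^2)$; the same geometric-decay argument across rounds; and the same space/communication bookkeeping for $\C$. The only thing you add beyond what the paper writes out is the explicit remark that the per-chunk dyadic decomposition has at most one full node per level and that the summands are independent, which the paper uses implicitly --- a reasonable thing to flag but not a departure in method.
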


\providecommand{\noopsort}[1]{}

\appendix

\begin{figure*}
\begin{center}
\includegraphics{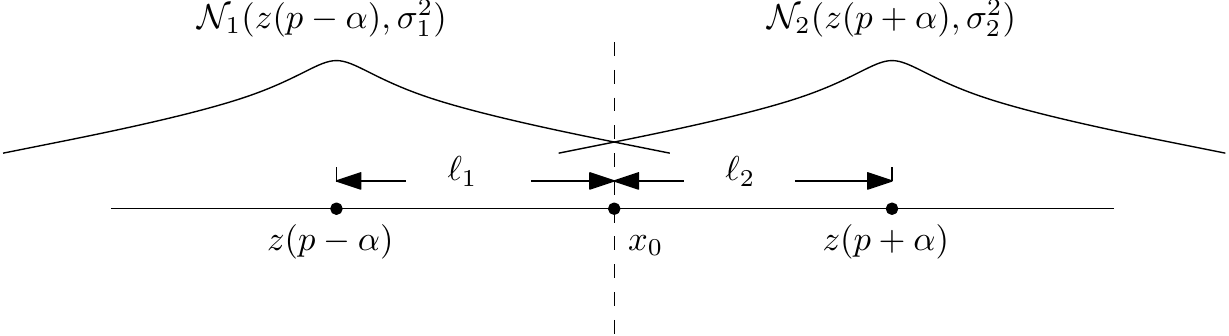}
\caption{Differentiating two distributions}
\label{fig:two-dist}
\label{fig:track-median}
\end{center}
\end{figure*}

\section{Lower bound for the sampling problem}
\begin{claim}
\label{cla:sampling}
To solve the sampling problem we need to probe at least $\Omega(k)$ sites.
\end{claim}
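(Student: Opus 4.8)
\emph{Proof idea.} The plan is to reduce the claim to a two-point hypothesis-testing bound and then show that the coordinator's ``view'' is almost the same under the two candidate values of $s'$. Recall the sampling problem: there are $k'=\Theta(k)$ sites, exactly $s'$ of which hold the bit~$1$, where $s'$ is one of two values that differ by $2\sqrt k$ and are both within $O(\sqrt k)$ of $k'/2$, each a priori equally likely; the coordinator probes $z$ of the sites and must output the correct value of $s'$ with probability at least~$0.7$. Since probing a site reveals exactly its bit and the input distribution is exchangeable across the sites, conditioning on what has been seen leaves the unprobed sites exchangeable, so the coordinator gains nothing from probing adaptively; we may take the probed set to be an arbitrary fixed set of size $z$, and then the number $C$ of ones it sees has the hypergeometric law $\mathrm{Hyp}(k',s',z)$ and is a sufficient statistic. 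Writing $P$ and $Q$ for the two laws of $C$, the optimal distinguisher succeeds with probability exactly $\tfrac12+\tfrac12\,d_{\mathrm{TV}}(P,Q)$, so it suffices to prove $d_{\mathrm{TV}}(P,Q)<0.4$, equivalently $\sum_j\min\{P(j),Q(j)\}>0.6$, whenever $z\le\gamma k'$ for a small enough absolute constant $\gamma$; since $k'=\Theta(k)$, this gives $z=\Omega(k)$.

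The crux is that $P$ and $Q$ are too close for a sublinear sample to tell them apart. Both have variance $\Theta(z)$, and, crucially, their variances are essentially equal: $\var[C]=z\cdot\tfrac{s'}{k'}\bigl(1-\tfrac{s'}{k'}\bigr)\cdot\tfrac{k'-z}{k'-1}$ and $\tfrac{s'}{k'}\bigl(1-\tfrac{s'}{k'}\bigr)=\tfrac14-\bigl(\tfrac{s'}{k'}-\tfrac12\bigr)^2$, where the two possible values of the subtracted square are $\Theta(1/k)$ and agree up to lower-order terms (they are exactly equal in the symmetric case $s'=k'/2\pm\sqrt k$). Their means, on the other hand, differ by only $\tfrac{2\sqrt k}{k'}\cdot z=\Theta(z/\sqrt k)$, a $\Theta(\sqrt{z/k})$ fraction of the common standard deviation $\Theta(\sqrt z)$. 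So $P$ and $Q$ are bell-shaped laws of essentially the same width whose centres are separated by a small fraction of that width, whence $d_{\mathrm{TV}}(P,Q)=O(\sqrt\gamma)$, which is below $0.4$ once $\gamma$ is small.

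To turn this into a proof I would pin $P$ and $Q$ down on their common bulk. A Chernoff-type tail bound for the hypergeometric distribution confines all but, say, $0.01$ of the mass of each of $P$ and $Q$ to an interval $B$ of length $O(\sqrt z)$ around $z/2$. For $j\in B$, the ratio $P(j)/Q(j)$ of the two hypergeometric mass functions can be written as a telescoping product of $2\sqrt k$ factors, each of which is within a multiplicative $1\pm O(\sqrt z/k)$ of $1$ --- here one uses that the two candidate values of $s'$ straddle $k'/2$ symmetrically up to lower-order terms, so that the two ``rates'' combined in each factor nearly cancel; hence $P(j)/Q(j)=\exp\bigl(O(\sqrt{z/k})\bigr)\in[\tfrac34,\tfrac43]$ throughout $B$ once $\gamma$ is a small enough constant. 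Therefore $\sum_j\min\{P(j),Q(j)\}\ge\sum_{j\in B}\min\{P(j),Q(j)\}\ge\tfrac34\,Q(B)\ge\tfrac34\cdot0.99>0.6$, i.e.\ $d_{\mathrm{TV}}(P,Q)<0.4$, contradicting the success requirement and establishing $z=\Omega(k)$. An alternative route avoids the bulk computation: invoke a Berry--Esseen bound for the hypergeometric law to pass to two normal distributions with the near-equal variances and $\Theta(\sqrt{z/k})$-separated means from the previous paragraph, and use the closed-form total variation distance between two Gaussians. In either case the delicate point --- and the step I expect to be the main obstacle --- is making the ``equal width, small separation'' intuition fully rigorous for the discrete hypergeometric law, with its mild negative dependence between coordinates; both routes achieve this with only routine calculation.
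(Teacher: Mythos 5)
Your proposal is correct, and it effectively subsumes the paper's argument: your route (b), which applies a normal approximation (Berry--Esseen or a local CLT) and then uses the closed-form total variation distance between two Gaussians of essentially equal width, is exactly what the paper does --- the paper invokes a local limit theorem from Feller for the hypergeometric law, identifies the optimal test as a threshold at the crossing point $x_0$ of the two approximating normal densities, and then bounds the error as $\tfrac12\bigl(\Phi(-\ell_1/\sigma_1)+\Phi(-\ell_2/\sigma_2)\bigr)$ with $\ell_1+\ell_2 = 2\alpha z = \Theta(z/\sqrt k)$ and $\sigma_1,\sigma_2 = \Theta(\sqrt z)$, which is a slightly more hands-on way of computing the same $\tfrac12 - \tfrac12 d_{\mathrm{TV}}$.

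Your route (a), by contrast, is a genuinely different and more self-contained argument. It works directly with the discrete hypergeometric pmf: a concentration bound confines the mass to a bulk $B$ of width $O(\sqrt z)$, and on $B$ the likelihood ratio $P(j)/Q(j)$ is controlled by pairing the $2\sqrt k$ telescoping factors coming from the numerator binomials with those from the denominator binomials so that their $\Theta(z/k)$ leading terms cancel, leaving only $O(\sqrt z/k)$ fluctuation terms; multiplying $2\sqrt k$ such paired factors gives $\exp(O(\sqrt{z/k}))$. This sidesteps the CLT entirely and is arguably more rigorous than the paper's proof, which relies on the normal approximation holding uniformly enough to transfer the exact-Gaussian TV calculation back to the hypergeometric --- a step the paper waves at by saying the approximation is ``very good'' but does not actually quantify. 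Two smaller things you do more carefully than the paper: you make the exchangeability/sufficiency reduction explicit, showing cleanly why an adaptive probing strategy is no better than a fixed one (the paper asserts this in passing a few lines earlier in the proof of Lemma~\ref{lem:1bit}); and you phrase the threshold $z \le \gamma k$ with an explicit constant rather than the asymptotic $z = o(k)$, which is what an $\Omega(\cdot)$ lower bound actually requires. One minor wording issue in your sketch: the individual telescoping factors are each $1 \pm O(z/k)$, not $1 \pm O(\sqrt z/k)$; it is only after pairing a numerator factor with its matching denominator factor that the $\Theta(z/k)$ terms cancel and the $O(\sqrt z/k)$ bound emerges. Your count of $2\sqrt k$ factors shows you intended the paired factors, but you should state this explicitly when writing it up.
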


\begin{proof}
Suppose that the coordinator only samples $z = o(k)$ sites. Let $X$ be the number of sites that are sampled with bit $1$. Then $X$ is chosen from the hypergeometric distribution with probability density function (pdf) $\Pr[X = x] = {s' \choose x}{k' - s' \choose z - x}/{k' \choose z}$.
The expected value of $X$ is $\frac{z}{k'} \cdot s'$, which is
$\frac{z}{k'}\left(\frac{k}{2} - y + \sqrt{k}\right)$ or
$\frac{z}{k'}\left(\frac{k}{2} - y - \sqrt{k}\right)$, depending on the
value of $s'$. Let $p = \left(\frac{k}{2} - y\right)/k' = \frac{1}{2} \pm o(1)$ and $\alpha =
\sqrt{k}/k' = 1/\sqrt{k} \pm o(1/\sqrt{k})$. To avoid tedious calculation, we assume that $X$ is picked
randomly from one of the two normal distributions $\mathcal{N}_1(\mu_1,
\sigma_1^2)$ and $\mathcal{N}_2(\mu_2, \sigma_2^2)$ with equal probability,
where $\mu_1 = z(p-\alpha), \mu_2 = z(p+\alpha), \sigma_1, \sigma_2 =
\Theta(\sqrt{zp(1-p)}) = \Theta(\sqrt{z})$. In Feller~\cite{feller:68} it is shown that the normal distribution approximates the hypergeometric
distribution very well when $z$ is large and $p \pm \alpha$ are
constants in $(0,1)$~\footnote{In Feller's book~\cite{feller:68} the
  following is proved. Let $p \in (0,1)$ be some constant and $q =
  1-p$. The population size is $N$ and the sample size is $n$, so that $n <
  N$ and $Np, Nq$ are both integers. The hypergeometric distribution is
  $P(k; n, N) = {Np \choose k}{Nq \choose n-k}/{N \choose n}$ for $0 \le k
  \le n$.
\begin{theorem}\cite{feller:68}
If $N \to \infty, n \to \infty$ so that $n/N \to t\in(0,1)$ and $x_k := (k
- np)/\sqrt{npq} \to x$, then
$$p(k;n,N) \sim \frac{e^{-x^2/2(1-t)}}{\sqrt{2\pi npq(1-t)}}$$
\end{theorem}
}. Now our task is to decide from which of the two
distributions $X$ is drawn based on the value of $X$ with success
probability at least $0.7$.

Let $f_1(x; \mu_1, \sigma_1^2)$ and $f_2(x; \mu_2, \sigma_2^2)$ be the
pdf of the two normal distributions
$\mathcal{N}_1, \mathcal{N}_2$, respectively. It is easy to see that the
best deterministic algorithm of differentiating the two distributions based
on the value of a sample $X$ will do the following.
\begin{itemize}
\item If $X > x_0$, then $X$ is chosen from $\mathcal{N}_2$, otherwise $X$
  is chosen from $\mathcal{N}_1$, where $x_0$ is the value such that
  $f_1(x_0; \mu_1, \sigma_1^2) = f_2(x_0; \mu_2, \sigma_2^2)$ (thus $\mu_1
  < x_0 < \mu_2$).
\end{itemize}
Indeed, if $X > x_0$ and the the algorithm decides that ``$X$ is chosen from $\mathcal{N}_1$", we can always flip this decision and improve the success probability of the algorithm.

The error comes from two sources: (1) $X > x_0$ but $X$ is actually drawn
from $\mathcal{N}_2$; (2) $X \le x_0$ but $X$ is actually drawn from
$\mathcal{N}_1$. The total error is
$$1/2 \cdot (\Phi(-\ell_1/\sigma_1) + \Phi(-\ell_2/\sigma_2)),$$ where
$\ell_1 = x_0 - \mu_1$ and $\ell_2 = \mu_2 - x_0$. (Thus $\ell_1 + \ell_2 =
\mu_2 - \mu_1 = 2 \alpha z$). $\Phi(\cdot)$ is the cumulative distribution
function (cdf) of the normal distribution. See Figure~\ref{fig:two-dist}.

Finally note that $\ell_1/\sigma_1 = O(\alpha z / \sqrt{z}) = O(\sqrt{z/k})
= o(1)$ and $\ell_2/\sigma_2 = O(\alpha z / \sqrt{z}) = o(1)$, so
$\Phi(-\ell_1/\sigma_1) + \Phi(-\ell_2/\sigma_2) > 0.99$. Therefore, the
failure probability is at least $0.49$, contradicting our success probability guarantee.  Thus we must have $z = \Omega(k)$.
\end{proof}


\begin{thebibliography}{10}

\bibitem{agarwal:mergeable}
P.~K. Agarwal, G.~Cormode, Z.~Huang, J.~M. Phillips, Z.~Wei, and K.~Yi.
\newblock Mergeable summaries.
\newblock In {\em Third Workshop on Massive Data Algorithmics}, 2011.

\bibitem{arackaparambil09}
C.~Arackaparambil, J.~Brody, and A.~Chakrabarti.
\newblock Functional monitoring without monotonicity.
\newblock In {\em Proc. International Colloquium on Automata, Languages, and
  Programming}, 2009.

\bibitem{Babcock:Olston:03}
B.~Babcock and C.~Olston.
\newblock Distributed top-k monitoring.
\newblock In {\em Proc. ACM SIGMOD International Conference on Management of
  Data}, 2003.

\bibitem{Bar-Yossef:02}
Z.~Bar-Yossef.
\newblock {\em The complexity of massive data set computations}.
\newblock PhD thesis, University of California at Berkeley, 2002.

\bibitem{chan10:_contin}
H.-L. Chan, T.~W. Lam, L.-K. Lee, and H.-F. Ting.
\newblock Continuous monitoring of distributed data streams over a time-based
  sliding window.
\newblock {\em Algorithmica}, to appear.

\bibitem{Cormode:Garofalakis:Muthukrishnan:Rastogi:05}
G.~Cormode, M.~Garofalakis, S.~Muthukrishnan, and R.~Rastogi.
\newblock Holistic aggregates in a networked world: Distributed tracking of
  approximate quantiles.
\newblock In {\em Proc. ACM SIGMOD International Conference on Management of
  Data}, 2005.

\bibitem{cormode08:_findin1}
G.~Cormode and M.~Hadjieleftheriou.
\newblock Finding frequent items in data streams.
\newblock In {\em Proc. International Conference on Very Large Data Bases},
  2008.

\bibitem{cormode08:algorithms}
G.~Cormode, S.~Muthukrishnan, and K.~Yi.
\newblock Algorithms for distributed functional monitoring.
\newblock {\em ACM Transactions on Algorithms}, 7(2), Article 21, 2011.
\newblock Preliminary version in SODA'08.

\bibitem{cormode10:_optim}
G.~Cormode, S.~Muthukrishnan, K.~Yi, and Q.~Zhang.
\newblock Continuous sampling from distributed streams.
\newblock {\em Journal of the ACM}, to appear.
\newblock Preliminary version in PODS'10.

\bibitem{gibbons01}
P.~B. Gibbons and S.~Tirthapura.
\newblock Estimating simple functions on the union of data streams.
\newblock In {\em Proc. ACM Symposium on Parallelism in Algorithms and
  Architectures}, 2001.

\bibitem{gibbons02:_distr}
P.~B. Gibbons and S.~Tirthapura.
\newblock Distributed streams algorithms for sliding windows.
\newblock In {\em Proc. ACM Symposium on Parallelism in Algorithms and
  Architectures}, 2002.

\bibitem{greenwald01:_space}
M.~Greenwald and S.~Khanna.
\newblock Space-efficient online computation of quantile summaries.
\newblock In {\em Proc. ACM SIGMOD International Conference on Management of
  Data}, 2001.

\bibitem{huang11}
Z.~Huang, L.~Wang, K.~Yi, and Y.~Liu.
\newblock Sampling based algorithms for quantile computation in sensor
  networks.
\newblock In {\em Proc. ACM SIGMOD International Conference on Management of
  Data}, 2011.

\bibitem{huang11:_optim}
Z.~Huang, K.~Yi, Y.~Liu, and G.~Chen.
\newblock Optimal sampling algorithms for frequency estimation in distributed
  data.
\newblock In {\em {IEEE} {INFOCOM}}, 2011.

\bibitem{feller:68}
W.~In.
\newblock {\em An introduction to probability theory and its applications Vol.
  1}.
\newblock Wiley, New York, 1968.

\bibitem{keralapura06}
R.~Keralapura, G.~Cormode, and J.~Ramamirtham.
\newblock Communication-efficient distributed monitoring of thresholded counts.
\newblock In {\em Proc. ACM SIGMOD International Conference on Management of
  Data}, 2006.

\bibitem{manjhi05:_findin}
A.~Manjhi, V.~Shkapenyuk, K.~Dhamdhere, and C.~Olston.
\newblock Finding (recently) frequent items in distributed data streams.
\newblock In {\em Proc. IEEE International Conference on Data Engineering},
  2005.

\bibitem{manku02:_approx}
G.~Manku and R.~Motwani.
\newblock Approximate frequency counts over data streams.
\newblock In {\em Proc. International Conference on Very Large Data Bases},
  2002.

\bibitem{metwally06}
A.~Metwally, D.~Agrawal, and A.~Abbadi.
\newblock {An integrated efficient solution for computing frequent and top-k
  elements in data streams}.
\newblock {\em ACM Transactions on Database Systems}, 31(3):1095--1133, 2006.

\bibitem{misra:finding}
J.~Misra and D.~Gries.
\newblock Finding repeated elements.
\newblock {\em Science of Computer Programming}, 2:143--152, 1982.

\bibitem{munro80:_selec}
J.~I. Munro and M.~S. Paterson.
\newblock Selection and sorting with limited storage.
\newblock {\em Theoretical Computer Science}, 12:315--323, 1980.

\bibitem{patt:08}
B.~Patt-Shamir and A.~Shafrir.
\newblock Approximate distributed top-k queries.
\newblock {\em Distributed Computing}, 21(1):1--22, 2008.

\bibitem{sharfman08:_shape}
I.~Sharfman, A.~Schuster, and D.~Keren.
\newblock Shape sensitive geometric monitoring.
\newblock In {\em Proc. ACM Symposium on Principles of Database Systems}, 2008.

\bibitem{suri06:_range}
S.~Suri, C.~Toth, and Y.~Zhou.
\newblock Range counting over multidimensional data streams.
\newblock {\em Discrete and Computational Geometry}, 2006.

\bibitem{vc-ucrfe-71}
V.~N. Vapnik and A.~Y. Chervonenkis.
\newblock On the uniform convergence of relative frequencies of events to their
  probabilities.
\newblock {\em Theory of Probability and its Applications}, 16:264--280, 1971.

\bibitem{verbin:_tight}
E.~Verbin, D.~P. Woodruff, and Q.~Zhang.
\newblock Tight bounds for distributed functional monitoring.
\newblock Manuscript, available at http://www.cse.ust.hk/qinzhang/distFp.pdf.

\bibitem{woodruff:07}
D.~Woodruff.
\newblock {\em Efficient and Private Distance Approximation in the
  Communication and Streaming Models}.
\newblock PhD thesis, Massachusetts Institute of Technology, 2007.

\bibitem{yao77}
A.~C. Yao.
\newblock Probabilistic computations: Towards a unified measure of complexity.
\newblock In {\em Proc. IEEE Symposium on Foundations of Computer Science},
  1977.

\bibitem{yi09:_optim}
K.~Yi and Q.~Zhang.
\newblock Optimal tracking of distributed heavy hitters and quantiles.
\newblock In {\em Proc. ACM Symposium on Principles of Database Systems}, 2009.

\bibitem{lall10:_global}
H.~Zhao, A.~Lall, M.~Ogihara, and J.~Xu.
\newblock Global iceberg detection over distributed data streams.
\newblock In {\em Proc. IEEE International Conference on Data Engineering},
  2010.

\end{thebibliography}
\end{document}